\newtheorem{remark}{Remark}
\newtheorem{theorem}{Theorem}
\newtheorem{lemma}{Lemma}
\newtheorem{corollary}{Corollary}
\newtheorem{definition}{Definition}
\def\ScaleIfNeeded{%
\ifdim\Gin@nat@width>\linewidth \linewidth \else \Gin@nat@width
\fi } \makeatother
\begin{document}
\title{Cache-enabling UAV Communications: Network Deployment and Resource Allocation}

\author{
Tiankui~Zhang,~\IEEEmembership{Senior Member,~IEEE,}
        Yi~Wang,
        Yuanwei~Liu,~\IEEEmembership{Senior Member,~IEEE,}
        Wenjun~Xu,~\IEEEmembership{Senior Member,~IEEE,}
        and        Arumugam Nallanathan,~\IEEEmembership{Fellow,~IEEE}

\thanks{
Tiankui Zhang, Yi Wang and Wenjun~Xu are with
Beijing University of Posts and Telecommunications, Beijing, China (e-mail: \{zhangtiankui, wangyi96, wjxu\}@bupt.edu.cn).}
\thanks{Yuanwei Liu and Arumugam Nallanathan are with Queen Mary University of London, London, U.K. (e-mail: \{ yuanwei.liu, a.nallanathan\}@qmul.ac.uk).}
}

\maketitle
\begin{abstract}
 In this article, we investigate the content distribution in the hotspot area, whose traffic is offloaded by the combination of the unmanned aerial vehicle (UAV) communication and edge caching. In cache-enabling UAV-assisted cellular networks, the network deployment and resource allocation are vital for quality of experience (QoE) of users with content distribution applications.
 We formulate a joint optimization problem of UAV deployment, caching placement and user association for maximizing QoE of users, which is evaluated by mean opinion score (MOS). To solve this challenging problem, we decompose the optimization problem into three sub-problems. Specifically, we propose a swap matching based UAV deployment algorithm, then obtain the near-optimal caching placement and user association by greedy algorithm and Lagrange dual, respectively. Finally, we propose a low complexity iterative algorithm for the joint UAV deployment, caching placement and user association optimization, which achieves good computational complexity-optimality tradeoff. Simulation results reveal that: i)~the MOS of the proposed algorithm approaches that of the exhaustive search method and converges within several iterations; and ii)~compared with the benchmark algorithms, the proposed algorithm achieves better performance in terms of MOS, content access delay and backhaul traffic offloading.
\end{abstract}
\begin{IEEEkeywords}
Edge caching, resource allocation, UAV deployment, user association.
\end{IEEEkeywords}

\section{Introduction}
In recent years, unmanned aerial vehicles (UAVs) have been widely used in many industries due to its small size, low price and high flexibility. The characteristics of UAVs make it possible to effectively solve problems in traditional communication, such as high deployment cost and poor adaptability to special scenarios. Therefore, UAV can be deployed as an air base station (BS) to assist the conventional cellular networks~\cite{6477825}. The main application scenarios of UAV communication include high-speed coverage of hotspots, information transmission, emergency communication and so on~\cite{7317490,6978873}. The data traffic requested by mobile users will increase dramatically in future mobile networks. It is predicted that data traffic in the global mobile  networks will reach 2~Zettabyte (ZB) in 2021~\cite{whitepaper}, of which 71\% is used for content distribution. Content caching at the network edge has been proposed as a key enabling technique for content-centric cellular networks to alleviate network traffic load~\cite{6871674}. The key idea of edge caching is to place popular contents close to users such as at the BSs~\cite{8945154} and user terminals~\cite{8868205} so as to reduce the content acquisition delay and backhaul link load. In order to meet the demand of data volume of multimedia content distribution and alleviate the traffic pressure of ground BSs of cellular networks, cache-enabling UAVs are deployed to offload the traffic in the peak hours of some hotspots~\cite{8614433,8576651,8717714,8603721,article,8254370,8626132}, which provides a low cost and rapid deployment solution for content distribution applications with high data rate and low latency requirements. \par

Quality of experience (QoE) is a subjective evaluation of the user's media experience, which has been used as the performance monitor of mobile networks~\cite{5430138}. In order to meet the requirement of high-quality data transmission of video applications, a certain QoE of user needs to be guaranteed.
In this paper, we study the users' QoE maximization in UAV-assisted cellular networks for content distribution. One potential application scenario is that, a stadium that hosts a large-scale sports event, which deploys cache-enabling UAV BSs outside the stadium for hotspots coverage to reduce the traffic load of ground BSs.
\vspace{-1em}
\subsection{Related Works}
Many researchers have carried out research in the field of UAV communication, and some typical problems in UAV communication systems have been discussed, such as UAV deployment, UAV caching deployment, UAV moving trajectory, resource allocation, content transmission security and so on. The relative location between the UAV and users would affect the data rate of content distribution, so the optimized UAV deployment would greatly improve the QoE of users, which has been studied in different scenarios~\cite{8867956,8776639,7486987,9003500,8247211,8432499,9013181}. The optimum placement of a relaying UAV to maximize the capacity of the relay network  was studied in~\cite{8867956}. The UAV's maneuver and power control were jointly optimized to maximize the ground secondary receiver's achievable rate under quasi-stationary UAV scenario and mobile UAV scenario in~\cite{8776639}. In order to maximize the coverage of UAV-assisted cellular networks, the static deployment of multiple UAV BSs in 3D space during UAV flight time was studied in~\cite{7486987}. UAV deployment, channel allocation and relay assignment were jointly optimized, aiming to maximize the capacity of the UAV-aided D2D network in~\cite{9003500}. UAV moving trajectory and communication design were jointly optimized to maximize the minimum throughout in~\cite{8247211}. UAV trajectory and user scheduling were jointly optimized to maximize the minimum worst-case secrecy rate among the users in~\cite{8432499}. Path control of massive UAVs for fast travel and low motion energy without inter-UAV collision was investigated in~\cite{9013181}. Some researches have been conducted in terms of resource allocation, precoding and UAV scheduling to improve network performance in UAV-assisted cellular networks~\cite{7738405,8723343,8648498,8629316,8417640}. A resource allocation optimization mechanism was proposed to minimize mean packet transmission delay in 3D cellular network with multi-layer UAVs in~\cite{7738405}. An energy-efficient resource allocation scheme with the ability of QoE enhancement was proposed in~\cite{8723343}. The joint design of the 3D UAV trajectory and the wireless resource allocation was studied for maximization of the system sum throughput over a given time period in~\cite{8648498}. The sum rate was maximized by jointly optimizing the UAV trajectory and the NOMA precoding in~\cite{8629316}. The proposed UAV scheduling framework was formulated in a generic manner and could be applied in multiple domains comprising short or long-term UAV missions while ensuring uninterrupted service~\cite{8417640}.\par

Edge caching has been a hot research topic in traditional cellular networks~\cite{6871674,8945154,7438747}. The concept of edge caching was proposed in~\cite{6871674}. The content caching and delivery technology of BSs were studied in~\cite{8945154}. In order to improve the QoS and transmission efficiency of the network, optimal caching placement strategy was carried out in~\cite{7438747}. A few research contributions have studied the edge caching combined with UAV communication~\cite{8614433,8576651,8717714,8603721,article,8254370,8626132}. The main purpose of the cache-enabling UAV is to cache popular contents in the UAV BSs related to their associated users so that most frequently requested contents can be served from local caches, instead of forwarding the users' requests over the bandwidth-limited wireless backhaul links. Caching placement was predicted based on content request distribution and optimized by cache space allocation and resource allocation~\cite{8614433}. {The placement of content caching and UAV location are jointly optimized to maximize throughput among IoT devices in \cite{8576651}.} {The content caching and transmission were jointly optimized to maximize the users' reliability in~\cite{8717714}.} In~\cite{8603721}, QoE of users was improved by optimal caching placement strategy. The joint optimization of caching placement and UAV deployment was carried out to maximize QoE in~\cite{article}. A fundamental study on secure transmission of cache-enabling UAV communications was given in~\cite{8626132}. QoE of users, which is as an important indicator for evaluating network performance, has been studied in~\cite{8723343,8603721,article} among above researches.

\vspace{-1em}
\subsection{Motivation and Contribution}
 In this paper, we take maximum QoE as our optimization target of network deployment and resource allocation in cache-enabling UAV-assisted cellular networks. We consider the multiple fixed UAVs deployment scenario, in which, the content access delay and QoE of users are directly related to the relative position between the user and UAV. If the content requested by the user is not cached in the UAV, UAV needs to fetch the content from the ground BS through wireless backhaul link for the associated user. Obviously, caching placement is vital to the content access delay.
 As we have summarized the related work above, there are several works focusing on the UAV deployment and caching placement. However, there are few papers considering the joint optimization of UAV deployment and caching placement. Meanwhile, the network performance of UAV deployment and caching placement is affected by user association which is also an important element that effects the QoE of users given channel bandwidth and transmit power allocation in the networks. Most works in the UAV-assisted cellular networks have ignored the optimization of user association. Motivated by this, we study the joint optimization of UAV deployment, caching placement and user association for QoE maximization. The main contributions of this paper are summarized as follows:

\begin{itemize}
  \item We propose a framework of cache-enabling UAV-assisted cellular networks and take the maximum QoE of users as our optimization target. We use the mean opinion score (MOS) to evaluate the QoE of users. Then, we formulate a joint optimization problem of UAV deployment, caching placement and user association to maximize the QoE of users in the networks.

  \item We propose a joint iterative algorithm to solve the optimization problem. The optimization problem is an integer programming problem which is an NP-hard problem and hard to solve directly. We divide the optimization problem into three sub-problems and solve them by low complexity algorithms respectively. We obtain the UAV deployment by the one-to-one swap matching. Then we obtain the near optimal caching placement and user association by greedy algorithm and Lagrange dual, respectively. Finally, we use the joint iterative algorithm to achieve a {suboptimal} solution. We analyze the computational complexity of the proposed algorithm which has a lower computational complexity than the exhaustive search.

  \item We demonstrate the convergence and network performance to verify the feasibility and effectiveness of the proposed algorithm. We show the convergence of the proposed algorithm by simulation results. The proposed algorithm obtains the {suboptimal} solution with only several iterations, which demonstrates that the computational complexity is greatly reduced at the cost of very small network performance degradation. Meanwhile, compared with the benchmark algorithms, the proposed algorithm achieves better performance in terms of MOS and content access delay of users, as well as the traffic offloading ratio of UAV backhaul links.

\end{itemize}
\subsection{Organization and Notations}
The rest of the paper is organized as follows. The system model and problem formulation are presented in Section II. Section III is the proposed algorithm to solve the optimization problem. In Section IV, we provide numerical simulation results. Finally, conclusions are drawn in Section~V. The main symbols and variables used in this paper are summarized in Table I.
\begin{table}[htbp]
  \begin{center}
      \caption{Main Symbol and Variable List}
      \small
   \begin{tabular}{|c|p{10cm}<{\centering}|}
      \hline
      \textbf{Parameter} & \textbf{Description}  \\       \hline
      $F$ & Number of contents \\ \hline
      $K$ & Number of users \\ \hline
      $M$ & Number of UAVs  \\ \hline
      $N$ & Number of candidate deployment locations  \\ \hline
      $H$ & Cache capacity of UAV   \\ \hline
      $s$ & Size of each content \\ \hline
      $B$ & Downlink bandwidth    \\ \hline
      {$x_{m,n}$} & {Indicator of whether UAV $m$ is deployed in candidate location $n$}\\ \hline
      {$s_{m,f}$} & {Indicator of whether content $f$ cached in UAV $m$}\\ \hline
      {$a_{m,k}$} & {Indicator of whether user $k$ associated with UAV $m$}\\ \hline
      $B_h$ & Backhaul link bandwidth \\ \hline
      $f_c$ & Carrier frequency \\ \hline
      $P_U$ & Transmit power of UAV BS\\ \hline
      $P_M$ & Transmit power of ground BS\\ \hline
      $PL$ & Pathloss  \\ \hline
      $P_{los}, P_{Nlos}$ & Probability of LoS/NLoS link\\ \hline
      ${\mu _{los}},{\mu _{Nlos}}$ & Shadowing random variable\\ \hline
      ${SINR}_{m,n,k}$ & SINR of user $k$ associated with UAV $m$ in candidate location $n$\\ \hline
      ${SINR}_{m,n}$ & SINR of UAV $m$ in candidate location $n$ from MBS\\ \hline
      $r_{m,n,k}$ &  Transmission rate from UAV $m$ in candidate location $n$ to user $k$\\ \hline
      $b_{m,n}$ & Transmission rate from ground BS to UAV $m$ in candidate location $n$\\ \hline
      $D_{m,k}$ & Content access delay of user $k$ associated with UAV $m$  \\ \hline
      ${MOS}_{m,k}$ & MOS of user $k$ associated with UAV $m$ \\ \hline
     \end{tabular}
  \end{center}
  \vspace{-2em}
\end{table}
\section{System Model}
In the cache-enabling UAV-assisted cellular networks, there is one ground macro BS (MBS) with several UAV BSs. Fig.~\ref{UAVs} shows an example of the practical scenario. The ground MBS near the stadium is overloaded which cannot fulfill the traffic requirement of the users in peak hours, for example, the time during a football match. {In this case, the small BSs are overloaded and users can be in terrible communication environment because of the limited frequency resource and SBS capacity. The traffic offloading is assisted by multiple static UAV BSs which are equipped with cache storages.} {The time-frequency resource of the MBS is limited. The congestion will be caused when a large number of users request contents from the MBS at the same time. UAV is used as relay instead of users communicating with the MBS directly.} With limited UAV endurance, UAVs are only deployed to assist ground BSs during peak hours. When the energy is used up, UAV can be recharged or replaced by a new UAV. In this framework, the cache-enabling UAVs save popular multimedia content replicas to serve the users. We define the set of UAVs as $\mathcal{M}= \left\{ {1,2,...,M} \right\}$ and the set of users as $\mathcal{K}= \left\{ {1,2,...,K} \right\}$, respectively. The cache capacity of each UAV is $H$ bits. The UAVs get contents from the ground BS  via wireless backhaul link and proactively cache some popular content replicas in non-peak hours. We assume that the downlink bandwidth of wireless access network is $B$ Hz and the bandwidth of wireless backhaul link is $B_h$ Hz. There is a finite content library, denoted as $\mathcal{F}= \left\{ {1,2,...,F} \right\}$. The size of each content is $s$ bits. A set of $N$ candidate UAV deployment locations, denoted by $\mathcal{N}= \left\{ {1,2,...,N} \right\}$, can be chosen by UAVs for deployment. The location of candidate location $n$ is ${{\bf{w}}_n} = \left( {{x_n},{y_n},{z_n}} \right)$. Each UAV has more than one candidate deployment locations to choose. Let ${x_{m,n}} = 1$ indicate that UAV $m$ is deployed in candidate location $n$, otherwise ${x_{m,n}} = 0$. Then the distance between UAV $m$ and user $k$, UAV $m$ and MBS with $x_{m,n}=1$ are denoted as $d_{m,k} = \sqrt {{{\left\| {{{\bf{w}}_n} - {{\bf{v}}_k}} \right\|}^2}}$, ${d_{m,0}} = \sqrt {{{\left\| {{{\bf{w}}_n} - {{\bf{v}}_0}} \right\|}^2}}$ respectively, where ${\bf{v}}_k$ and ${\bf{v}}_0$ are the location of user $k$ and MBS respectively. Let $q_{k,f} = 1$ indicate that user $k$ requests the content $f$, otherwise $q_{k,f} = 0$. ${a_{m,k}} = 1$ indicates user $k$ is associated with UAV $m$, otherwise ${a_{m,k}} = 0$. One user can only be associated with one UAV, but one UAV can be associated with several users. ${u_{m,f}} = 1$ indicates content $f$ is cached in UAV $m$, otherwise ${u_{m,f}} = 0$. Each UAV can cache $H/s$ contents at most.

\begin{figure} [htbp!]
\centering
\includegraphics[width=0.6\linewidth]{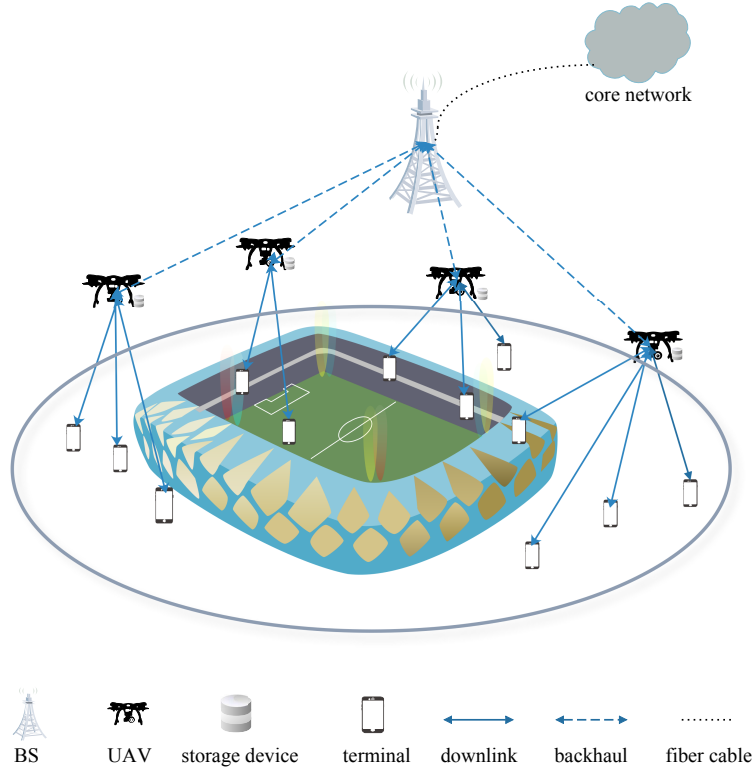}
 \caption{Cache-enabling UAV-assisted cellular networks}
 \label{UAVs}
 \vspace{-2em}
\end{figure}
\subsection{Transmission Model}
The transmission links in this system model follow the UAV channel model provided by 3GPP~\cite{3gpp}. We describe the transmission links between UAVs and users, MBS and UAVs.\par
The propagation channel of the UAV-user and MBS-UAV is modeled by the standard log-normal shadowing model. The standard log-normal shadowing model can be used to model the line-of-sight (LoS) and non-line-of-sight (NLoS) links by choosing specific channel parameters. The received signal-to-interference-plus-noise-ratio (SINR) of user $k$ from UAV $m$ deployed in candidate location $n$ is
\begin{equation}
{S\!I\!N\!R_{m,n,k}} = \frac{{{P_{U}}{{10}^{ - \frac{{P{L_{m,n,k}}}}{{10}}}}}}{{\sum\limits_{m' \ne m, n' \ne n} {{P_{U}}{{10}^{ - \frac{{P{L_{m',n',k}}}}{{10}}}}}  + {\sigma ^2}}},
\end{equation}
where $P_{U}$ is the transmit power of the UAV BS\footnote {Power control is very important to mitigate co-channel interference due to the spectrum sharing among multiple UAVs~\cite{8776639,9018112,8038869}. We mainly focus on the joint optimization of UAV deployment, caching placement and user association in this paper, while leaving power control optimization for our future work.}, and $\sigma ^2$ is the variance of the Gaussian noise. In order to make full use of the spectrum resources, we divide the bandwidth according to the number of users associated with the UAV~\cite{6497017}. So the downlink transmission rate from UAV $m$ deployed in candidate location $n$ to user $k$ is
\begin{equation}
{r_{m,n,k}} = \frac{B}{{\sum\nolimits_{k = 1}^K {{a_{m,k}}}}}{\log _2}\left( {1 + {S\!I\!N\!R_{m,n,k}}} \right).
\end{equation}
The SINR of UAV $m$ deployed in candidate location $n$ from the MBS is
\begin{equation}
S\!I\!N\!R_{m,n} = \frac{P_{M}10^{ - \frac{PL_{m,n}}{10}}}{I_{m,n} + {\sigma ^2}},
\end{equation}
where $P_M$ is the transmit power of the MBS, $I_m$ is the interference from other MBSs to UAV $m$. So the downlink transmission rate from the MBS to UAV $m$ deployed in candidate location $n$ is
\begin{equation}
b_{m,n} = \frac{{{B_{h}}}}{{\sum\nolimits_{k = 1}^K {{a_{m,k}}} }}{\log _2}\left( {1 + S\!I\!N\!R_{m,n}} \right).
\end{equation}
\subsection{Caching Model}
In the cache-enabling UAV-assisted cellular network, UAVs are equipped with cache storage device with limited caching capacity. {If a requested content of a user is cached in its serving UAV, this content would be transmitted to this user via radio downlink directly. Otherwise, the requested content would be first fetched from the core network by its serving UAV via wireless backhaul link with the MBS and then transmitted to this user via radio downlink of its serving UAV.} The content library consists of a limited number of $F$ distinct contents. Each content can be cached in different UAVs, but it can only be placed once in an UAV. We assume the frequency for users to request each of these contents, i.e., popularity of each content, follows a Zipf distribution~\cite{Breslau2002Web}. The popularity distribution of the contents is assumed to remain static over a certain duration~\cite{6175019}. The caching contents of each UAV will be updated regularly. Each user has different request possibility for contents in content library based on the content popularity and requests one content based on content request possibility.
\subsection{Transmission Delay and MOS Model} {In our system model, the transmission delay is divided into two parts, i.e., the downlink radio transmission delay and the backhaul link transmission delay, as shown in Fig.~\ref{UAVs}.}
The downlink radio transmission delay from UAV $m$ to user $k$ is
\begin{equation}
\label{Da}
D_{m,k}^{a} = \frac{{\sum\nolimits_{f=1}^F {s{q_{k,f}}} }}{{{r_{m,n,k}}}}.
\end{equation}
The backhaul link transmission delay from the MBS to UAV $m$ is
\begin{equation}
D_{m,k}^{b} = \frac{{\sum\nolimits_{f=1}^F {\left( {1 - {u_{m,f}}} \right)s{q_{k,f}}} }}{b_{m,n}}.
\end{equation}
If the content requested by user $k$ has been cached in UAV $m$, the backhaul link is no longer needed, that is, $D_{m,k}^{b}=0$ when $u_{m,f}=1$.
The transmission delay from UAV $m$ to user $k$ is
\begin{equation}
{D_{m,k}} = \frac{{\sum\nolimits_{f=1}^F {s{q_{k,f}}} }}{{{r_{m,n,k}}}} + \frac{{\sum\nolimits_{f=1}^F {\left( {1 - {u_{m,f}}} \right)s{q_{k,f}}} }}{{b_{m,n}}}.
\end{equation}
Let ${w_m} = \sum\nolimits_{k=1}^K {{a_{m,k}}}$, we have
\begin{equation}
D_{m,k} = \frac{w_m\sum\nolimits_{f = 1}^F {s{q_{k,f}}}}{{B{{\log}_2}\left( {1 + S\!I\!N\!R _{m,n,k}} \right)}} + \frac{{{w_m}\sum\nolimits_{f = 1}^F {\left( {1 - {u_{m,f}}} \right)s{q_{k,f}}} }}{{{B_h}}{{\log }_2}\left( 1 +S\!I\!N\!R_{m,n} \right)}.
\end{equation}
\par {Inspired by the widely used QoE metric, mean opinion score (MOS) model is used as a measure of the users' QoE for the services like video streaming, content download, or web browsing. As one of the most popular application in wireless networks, we focus on video contents delivery in this paper. The value of MOS is depend on the transmission delay which is an important performance indicator of the mobile networks.} The MOS model is denoted as~\cite{6877621}
\begin{equation}
\label{MOS_model}
MOS_{m,k}= {C_1}\ln \left( {\frac{1}{{{D_{m,k}}}}} \right) + {C_2},
\end{equation}
$C_1$ and $C_2$ are both constants and ${C_1} > 0$. It's obvious that the smaller the delay, the larger the MOS. From the results of our data we set ${C_1}{\rm{ = }}1.120$, ${C_2}{\rm{ = }}4.6746$ so that the value of $MOS_{m,k}$ is ranging from 1 to 5. The higher the score, the better the user's QoE will be.
\subsection{Problem Formulation}
We formulate the joint optimization problem of the UAV deployment, caching placement and user association. The optimization objective is to maximize the MOS of all the users in the cell, which can be expressed as follows
\begin{equation}
\label{op_problem}
\mathop {\max }\limits_{x,a,u} \sum\limits_{m = 1}^M {\sum\limits_{k = 1}^K {{a_{m,k}}MO{S_{m,k}}} },
\end{equation}
According to the definition of MOS in~\eqref{MOS_model}, \eqref{op_problem} can be equivalently expressed as
\begin{equation}
\mathop {\max }\limits_{x,a,u} \sum\limits_{m = 1}^M {\sum\limits_{k = 1}^K {{a_{m,k}}\left( {{C_1}\ln \left( {\frac{1}{{{D_{m,k}}}}} \right) + {C_2}} \right)} }.
\end{equation}
Then we have
\begin{align}
\begin{array}{l}
\;\;\;\sum\limits_{m = 1}^M {\sum\limits_{k = 1}^K {{a_{m,k}}\left( {{C_1}\ln \left( {\frac{1}{{{D_{m,k}}}}} \right) + {C_2}} \right)} } \\
{\rm{ = }}\;{C_1}\sum\limits_{m = 1}^M {\sum\limits_{k = 1}^K {{a_{m,k}}\ln \left( {\frac{1}{{{D_{m,k}}}}} \right)} } {\rm{ + }}\sum\limits_{m = 1}^M {\sum\limits_{k = 1}^K {{a_{m,k}}{C_2}} }
{\rm{ = }}\;{C_1}\sum\limits_{m = 1}^M {\sum\limits_{k = 1}^K {{a_{m,k}}\ln \left( {\frac{1}{{{D_{m,k}}}}} \right)} } {\rm{ + }}K{C_2}.
\end{array}
\end{align}
Let ${Q_{m,k}} = \ln \left( {\frac{1}{{{D_{m,k}}}}} \right)$.
In doing so, the formulated MOS maximization problem is transformed as follows,
\begin{align}
\label{optimal1}
&\mathop {\max }\limits_{x,a,u} \sum\limits_{m = 1}^M {\sum\limits_{k = 1}^K {{a_{m,k}}{Q_{m,k}}}} \\
\label{25a}
&{\rm{s}}{\rm{.t}}{\rm{.}}{\kern 1pt} {\kern 1pt} {\kern 1pt} {\kern 1pt}{a_{m,k}} \in \left\{ {0,1} \right\},\forall m,\forall k,\tag{\ref{optimal1}a}\\
\label{25b}
&{\kern 1pt}{\kern 1pt}{\kern 1pt}{\kern 1pt}{\kern 1pt}{\kern 1pt}{\kern 1pt}{\kern 1pt}{\kern 1pt}{\kern 1pt}{\kern 1pt}{\kern 1pt}{\kern 1pt}{\kern 1pt}{\kern 1pt}{\kern 1pt}{\kern 1pt}{x_{m,n}} \in \left\{ {0,1} \right\},\forall m,\forall n,\tag{\ref{optimal1}b}\\
\label{25c}
&{\kern 1pt}{\kern 1pt}{\kern 1pt}{\kern 1pt}{\kern 1pt}{\kern 1pt}{\kern 1pt}{\kern 1pt}{\kern 1pt}{\kern 1pt}{\kern 1pt}{\kern 1pt}{\kern 1pt}{\kern 1pt}{\kern 1pt}{\kern 1pt}{\kern 1pt}{u_{m,f}} \in \left\{ {0,1} \right\},\forall m,\forall f,\tag{\ref{optimal1}c}\\
\label{25d}
&{\kern 1pt}{\kern 1pt}{\kern 1pt}{\kern 1pt}{\kern 1pt}{\kern 1pt}{\kern 1pt}{\kern 1pt}{\kern 1pt}{\kern 1pt}{\kern 1pt}{\kern 1pt}{\kern 1pt}{\kern 1pt}{\kern 1pt}{\kern 1pt}{\kern 1pt}\sum\limits_{m =1}^M{{a_{m,k}} = 1,\forall k},\tag{\ref{optimal1}d}\\
\label{25e}
&{\kern 1pt}{\kern 1pt}{\kern 1pt}{\kern 1pt}{\kern 1pt}{\kern 1pt}{\kern 1pt}{\kern 1pt}{\kern 1pt}{\kern 1pt}{\kern 1pt}{\kern 1pt}{\kern 1pt}{\kern 1pt}{\kern 1pt}{\kern 1pt}{\kern 1pt}\sum\limits_{f=1}^F s{{u_{m,f}} \le H,} {\kern 1pt} \forall m.\tag{\ref{optimal1}e}
\end{align}
The constrains~\eqref{25a},~\eqref{25b}, and~\eqref{25c} show three binary variables we need to optimize. \eqref{25d} means that one user can only be associated with one UAV. \eqref{25e} is the caching capacity limitation of each UAV.
\section{Proposed Algorithm for Maximum MOS} {Based on the optimization problem formulated in section II, we propose a joint iterative algorithm for maximum MOS in this section.}
The problem we need to solve is the joint optimization of UAV deployment, caching placement and user association. The optimization problem is an NP-hard problem and hard to solve within polynomial time~\cite{booknp}. The variables in this optimization problem are all 0-1 discrete variables. If we use the brute-force search algorithm, the complexity of the algorithm will exceed our computational capacity and it is difficult to achieve. In order to solve this problem effectively, we decompose it into three sub-problems and propose corresponding algorithms:
\begin{enumerate}[\indent(1)]
  \item one-to-one swap matching based UAV deployment algorithm;
  \item greedy based caching placement algorithm;
  \item Lagrange dual based user association algorithm.
\end{enumerate} {The solutions of the three sub-problems are denoted as $\mathcal {X}$, $\mathcal {U}$ and $\mathcal {A}$, respectively.}
\subsection{One-to-One Swap Matching for UAV Deployment}
Although the exhaustive search algorithm can solve the optimization problem of discrete variable, the algorithm complexity is exponential. So it is only suitable for small scale networks. {For practical application, many variations of the basic matching problem have emerged with an array of applications in areas as wide as labor markets, college admissions programs, and communication networks. The one-to-one matching can propose a decentralized algorithm to find a pairwise stable solution with low complexity and fast convergence. So we introduce the one-to-one swap matching to solve UAV deployment sub-problem with fixed caching placement and user association.}

  The set of UAVs $\mathcal{M}$ and the set of candidate locations $\mathcal{N}$ are finite and disjoint sets. When UAV $m$ is deployed in candidate location $n$, the matching pair is denoted as $\left( m ,n \right)$. {In this paper, we build the preference list based on the MOS of users. Each UAV has a preference list over the set of candidate locations. Analogously, each candidate location has a preference list over the set of UAVs. The individual preferences represent the priorities. If UAV $m$ prefers candidate location $n$ to $n'$, we denote it as ${n}{ \succ _{{m}}}{n'}$.} We assume that the preference list of each agent has the following properties:
\begin{itemize}
  \item \textit{complete ordering}: each agent will never confront with an indeterminable choice, i.e., any two alternatives can be compared for an agent to get a preferred one.
  \item \textit{transitive}: it can be denoted as if ${n}{ \succ _{{m}}}{n'}$ and ${n'}{ \succ _{{m}}}{n''}$, then ${n}{ \succ _{{m}}}{n''}$.
\end{itemize}
Based on the above description about one-to-one matching, we give the following definitions~\cite{inproceedings}.
\begin{definition}
A one-to-one matching $\varphi$ is a function from the set $\mathcal{M} \cup \mathcal{N}$ into the set of unordered families of elements of $\mathcal{M} \cup \mathcal{N} \cup \{0\}$, such that
\begin{itemize}
\item $\left| {\varphi \left( m \right)} \right| = 1$ for every $m \in \mathcal{M}$;
\item $\left| {\varphi \left( n \right)} \right| \le 1$ for every $n \in \mathcal{N}$;
\item $n = \varphi \left( m \right) \Leftrightarrow m = \varphi \left( n \right),m \in \mathcal{M},n \in \mathcal{N}$.
\end{itemize}
\end{definition} {The candidate location may not deploy any UAV, but the UAV will be deployed in a certain candidate location.} From the transmission model, we can see that the location of UAV is different, the pathloss between the UAV and the users associated with the UAV will be different, users associated with other UAVs will also be affected. So this is a one-to-one matching with externality. It is not straightforward to define a stability concept since the gains from a matching pair also depends on which the other agents have. Driven by the definition of exchange stable stability, it is convenient to define a swap matching~\cite{8415760,8422358}. Specifically, a swap matching is defined as $\varphi _m^{m'} = \left\{ {\varphi \backslash \left\{ {(m,n),\left( {m',n'} \right)} \right\} \cup \left\{ {\left( {m,n'} \right),\left( {m',n} \right)} \right\}} \right\}$, $\varphi \left( m \right) = n,\varphi \left( m' \right) = n',m,m' \in \mathcal{M},n,n' \in \mathcal{N}$. Based on the swap operation, the definition of a two-sided exchange-stable matching is introduced as follows.
\begin{definition}
A matching $\varphi$ is two-sided exchange-stable if and only if there doesn't exist a pair of agents $\left( m,m'\right)$ with $\varphi \left( m \right) = n,\varphi \left( m' \right) = n'$, such that:
\begin{itemize}
  \item $\forall x \in \left\{ {m,n,m',n'} \right\},{U_x}\left( {\varphi _m^{m'}} \right) \ge {U_x}\left( \varphi  \right)$;
  \item $\exists x \in \left\{ {m,n,m',n'} \right\},{U_x}\left( {\varphi _m^{m'}} \right) > {U_x}\left( \varphi  \right)$.
\end{itemize}
$\left( m,m'\right)$ is called a blocking pair.
\end{definition}
${U_x}\left( \varphi  \right)$ is the utility of agent $x$ under matching $\varphi$. The characteristics of the blocking pair ensure that if a swap matching is approved, the achievable utility of any agent involved will not decrease and at least one agent's utility will increase. The definition indicates that a swap matching is two-sided exchange-stable when there doesn't exist any blocking pair in $\mathcal{M} \cup \mathcal{N} \cup \{0\}$. To avoid the meaningless cycle of swap matching, we ensure the number of swap between UAVs and candidate locations is less than 2.\par

As discussed above, the UAV deployment problem is a one-to-one matching problem with externality. To model the externality, the preference list of UAV $m$ for candidate location $n$ is formulated as the sum of user's MOS associated with the UAV, which is denoted as follows,
\begin{equation}\label{pre1}
\mathcal{U}_m^n= \sum\limits_{k=1}^K {{a_{m,k}}MOS_{m,k}},\;\;\;\varphi \left( m \right) = n.
\end{equation}
The preference list of the candidate location $n$ for UAV $m$ is denoted as
\begin{equation}\label{pre2}
\mathcal{U}_n^m = \left\{ \begin{array}{l}
\sum\limits_{k=1}^K {{a_{m,k}}MOS_{m,k},\;\;\;\varphi \left( n \right) = m}, \\
0 ,\quad \quad \quad \quad \quad \;\;\;\;\;\;\varphi \left( n \right) = 0,
\end{array} \right.
\end{equation}
where $\varphi \left( n \right) = 0$ represents that there is no UAV deployed in candidate location $n$.\par {Specifically, for UAV $m$, any two candidate locations $n$ and $n'$, and any two matchings $\varphi$ and $\varphi^{'}$, we have the following relations,}
\begin{equation}
\label{prefer}
\left( {n,\varphi } \right){ \succ _{m}}\left( n',\varphi ' \right) \Leftrightarrow {\mathcal{U}^n}\left( \varphi  \right) > {\mathcal{U}^{n'}}\left( {\varphi '} \right),
\end{equation}
which implies that UAV $m$ prefers the candidate $n$ to the $n'$ only if UAV $m$ can achieve a higher MOS in $n$ than $n'$.\par {We utilize the Gale-Shapley (GS) algorithm proposed in~\cite{stable} to construct the initial matching state between UAVs and candidate locations. In the GS based initialization procedure, we calculate the SNR between UAV and users instead of SINR. Then the UAVs and candidate locations can build their own preference lists by~\eqref{pre1} and~\eqref{pre2}. Based on the established preference lists, each candidate location proposes to the favorite UAV based on its preference list. At the UAV acceptance phase, each UAV accepts the candidate location with prior preference and rejects others. The algorithm terminates when all UAVs have been matched to the candidate locations or every unmatched location has been rejected by every UAV. Based on the initial matching state, the swap operation procedure is employed to further enhance the utility. The process of one-to-one swap matching based UAV deployment algorithm is summarized in \textbf{Algorithm~\ref{MLD}}.}
\begin{algorithm}
\caption{Swap matching based UAV deployment algorithm}
\label{MLD}
\begin{algorithmic}[1]
\STATE  Construct the initial UAV-DC matching state $\mathcal{S}_I$ based on the GS algorithm. The matching state is denoted as $\mathcal{S}$. Let $\mathcal{S}=\mathcal{S}_I$
\REPEAT
\STATE  For any UAV $m \in\mathcal{S}_I$, it searches for another UAV $m' \in \mathcal{S}_I\backslash \mathcal {S}_I$$\left( {\varphi \left( m \right)} \right)$
\IF{$m' \ne 0$ }
\IF{$\left( {m,m'} \right)$ is a blocking pair}  
\STATE Swap $\left( {m,m'} \right)$, $\varphi  = \varphi _m^{m'}$
\ELSE
\STATE  Keep the current matching state
\ENDIF
 \ELSE
\IF{${U_m}\left( {\varphi _m^0} \right) \ge {U_m}\left( \varphi  \right)$}
\STATE Swap $\left( {m,0} \right)$, $\varphi  = \varphi _m^0$
\ELSE
\STATE Keep the matching state
\ENDIF
\ENDIF
\UNTIL{No blocking pair in the matching}
\STATE  \textbf{Output:}  matching state $\varphi$\
\end{algorithmic}
\end{algorithm}

The complexity, convergence and stability of \textbf{Algorithm~\ref{MLD}} are analyzed as following.\par
\emph{(1) Complexity:} {The GS algorithm requires each candidate location to propose to one UAV based on its preference list, and each UAV accepts its favorite candidate location.} The computational complexity of the initialization GS algorithm is $\mathcal{O}\left( {MN} \right)$. In the swap matching process, there are at most $N-1$ candidate locations for each candidate location in each iteration to swap. For a given number of total iteration $L$, the complexity is $\mathcal{O}\left( {C_N^2{ML}} \right)$. Hence, the complexity of \textbf{Algorithm~\ref{MLD}} is $\mathcal{O}\left( {MN{\rm{ + }}C_N^2{ML}} \right)$.\par
\emph{(2) Convergence and Stability:} According to \textbf{Algorithm~\ref{MLD}}, any UAV cannot find another candidate location to form a swap-blocking pair under the current matching $\varphi$. Hence, a two-sided exchange-stable matching is formed between UAVs and candidate locations. Since the utility function will increase monotonically by the swap operation in \textbf{Algorithm~\ref{MLD}} and the utility function is bounded due to the bandwidth constraint, \textbf{Algorithm~\ref{MLD}} would reach a local solution after finite swap operations and converge to a two-sided exchange-stable status. However, not all two-sided exchange-stable matching are local optimal.

\subsection{Greedy Algorithm for Caching Placement}
Next we need to solve the caching placement. For each UAV in the network, caching strategy is independent of each other, when user association and UAV deployment are both determined. Caching placement is directly related to the preference of the users associated with the UAV. {In this case, the optimization sub-problem of  caching placement is denoted as}
\begin{align}\label{caching placement}
\mathop {\max }\limits_u&\sum\limits_{k = 1}^K {{a_{m,k}}\ln \left( {\frac{1}{{{D_{m,k}}}}} \right)}\\
&{\rm{s}}{\rm{.t}}{\rm{.}}~\eqref{25c},~\eqref{25e}\notag,
\end{align}
where \eqref{25c} denotes that $u_{k,f}$ is a 0-1 binary variable.~\eqref{25e} denotes that the cache space of UAV is limited.

\begin{definition}
Let $\mathit{g}:2^{\mathcal{G}} \to \mathbb{R}$ represents a set function. When the following two conditions are satisfied, we say that $\mathit{g}$ is a monotonic and submodular function set.
\begin{enumerate}[\indent(1)]
  \item For every ${\cal X} \subseteq {\cal Y} \subseteq {\cal G}$, we have $g\left( {\cal X} \right) \le g\left( {\cal Y} \right)$;
  \item For every ${\cal X} \subseteq {\cal Y} \subseteq {\cal G}$, and $x \in {\cal G}\backslash {\cal Y}$ we have:\par
   $g\left( {{\cal X} \cup \left\{ x \right\}} \right) - g\left( {\cal X} \right) \ge g\left( {{\cal Y} \cup \left\{ x \right\}} \right) - g\left( {\cal Y} \right)$, \\where $\mathcal{G}$ is the ground set.
\end{enumerate}
\end{definition}
\par
\begin{theorem}
\label{lemma_cp}
$Q\left( u\right )$ is a monotone and submodular function. We can obtain a near-optimal solution by greedy algorithm within polynomial time.
\end{theorem}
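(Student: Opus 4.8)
The plan is to recast the objective of~\eqref{caching placement} as a set function over the family of cached contents and then verify the two conditions in the preceding definition of a monotone, submodular set function. I would fix the UAV deployment and user association, so that the only free variables are the caching indicators $\{u_{m,f}\}_{f\in\mathcal{F}}$ of a single UAV $m$, and identify them with the cached subset $\mathcal{S}=\{f:u_{m,f}=1\}\subseteq\mathcal{F}$, writing $Q(\mathcal{S})=\sum_{k=1}^K a_{m,k}\ln\!\big(1/D_{m,k}(\mathcal{S})\big)$. With deployment and association frozen, the access rate $r_{m,n,k}$ and the backhaul rate $b_{m,n}$ are constants, so $D_{m,k}(\mathcal{S})=\alpha_k+\beta_k\sum_{f\notin\mathcal{S}}q_{k,f}$ for strictly positive constants $\alpha_k$ (the downlink term) and $\beta_k=w_m s/\big(B_h\log_2(1+S\!I\!N\!R_{m,n})\big)$ (the per-content backhaul term). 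This affine representation of the delay is what I would use throughout.

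Monotonicity is the easy half: enlarging $\mathcal{S}$ can only delete nonnegative summands from $\sum_{f\notin\mathcal{S}}q_{k,f}$, so $D_{m,k}(\mathcal{S})$ is non-increasing in $\mathcal{S}$ and hence $\ln(1/D_{m,k})$ is non-decreasing; summing over $k$ with the nonnegative weights $a_{m,k}$ gives $Q(\mathcal{X})\le Q(\mathcal{Y})$ for every $\mathcal{X}\subseteq\mathcal{Y}$, which is condition~(1).

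For submodularity I would exploit that a nonnegative-weighted sum of submodular functions is submodular, so it suffices to treat each per-user term $q_k(\mathcal{S})=\ln(1/D_{m,k}(\mathcal{S}))$ separately. Here I would invoke the modelling assumption that every user requests exactly one content $f_k$, i.e.\ $\sum_f q_{k,f}=1$: then $\sum_{f\notin\mathcal{S}}q_{k,f}=\mathbf{1}\{f_k\notin\mathcal{S}\}$ takes only the values $0$ or $1$, so $q_k(\mathcal{S})$ depends on $\mathcal{S}$ only through whether $f_k$ has been cached. Consequently the marginal gain $q_k(\mathcal{S}\cup\{f\})-q_k(\mathcal{S})$ equals the positive constant $\ln\frac{\alpha_k+\beta_k}{\alpha_k}$ when $f=f_k$ and $f_k\notin\mathcal{S}$, and equals $0$ otherwise; in either case it is non-increasing as $\mathcal{S}$ grows, which is exactly condition~(2). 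With equal content sizes the capacity constraint~\eqref{25e} reduces to the cardinality bound $|\mathcal{S}|\le\lfloor H/s\rfloor$, i.e.\ a uniform matroid, so the classical greedy guarantee for maximizing a monotone submodular function under a cardinality constraint applies: repeatedly adding the content of largest marginal gain returns a solution within a factor $(1-1/e)$ of the optimum using $\mathcal{O}(F\,H/s)$ evaluations, which is polynomial.

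The main obstacle is the submodularity step, and it is more delicate than it first appears: $\ln(1/D)$ is a convex, increasing transform of the affine-decreasing ``coverage count'' $\sum_{f\notin\mathcal{S}}q_{k,f}$, and a convex increasing transform of a modular function is in general supermodular rather than submodular. The claim would therefore fail if a user could request several contents; it is precisely the single-request structure $\sum_f q_{k,f}=1$ that collapses each per-user term to a two-valued, hence modular, function and rescues submodularity. I would accordingly make this reduction explicit and flag that the result hinges on $\sum_f q_{k,f}=1$, rather than attempt a direct diminishing-returns inequality for a general multi-content demand.
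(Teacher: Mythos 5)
Your proof is correct and follows the same overall strategy as the paper's Appendix~A: freeze the deployment and association, decompose $Q$ into per-user terms $Q_{m,k}$, and lean on the fact that each user requests exactly one content. Where the paper verifies the diminishing-returns inequality by enumerating four cases according to whether the requested content lies in $\left\{ x \right\}$, $\mathcal{X}$, or $\mathcal{Y}$, you instead observe that the single-request assumption collapses each per-user term to $c_0 + c_1\mathbf{1}\left\{ f_k\in\mathcal{S} \right\}$, i.e.\ a modular function, so submodularity (indeed modularity) of the weighted sum is immediate; the two arguments are logically equivalent, but yours is more compact and, as a bonus, implies that greedy under the uniform-matroid cache constraint is exactly optimal for this single-UAV subproblem, not merely $(1-1/e)$-optimal as the paper (and its Remark~1) claims. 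Your closing caveat is also a genuine addition the paper does not make explicit: since $\ln(1/D_{m,k})$ is a convex increasing transform of the (negated) count of uncached requested contents, each per-user term would be supermodular rather than submodular if a user could request several contents, so the theorem really does hinge on $\sum_f q_{k,f}=1$ --- a hypothesis the paper invokes only in passing (``a user can only request a content'') without flagging that the result would otherwise fail.
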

\begin{proof}
See Appendix A for a detailed proof.
\end{proof}
We can obtain a near-optimal solution by \textbf{Algorithm~\ref{cache_placement}} within polynomial time.
\begin{algorithm}[htbp!]
\caption{Greedy based caching placement algorithm}
\label{cache_placement}
\begin{algorithmic}[1]
\STATE \textbf{Input:} the feasible solution set $\cal I$, the current caching placement $\mathcal{U}$ and the content library $\mathcal{F}$
\STATE \textbf{Initialize:} left cache space ${C_m} \leftarrow H$, the current caching placement $\mathcal{U} \leftarrow \emptyset$, $CF \leftarrow \mathcal{F}$
\WHILE {$C_m > 0$}
\STATE Choose content $f^{*}$ for UAV $m$ by
\STATE ${f^*} = \mathop {\arg \max }\limits_{f \in CF} \left( \sum\limits_{k = 1}^K {Q\left( {\mathcal{U} \cup {u_{m,f}}} \right) - \sum\limits_{k = 1}^K Q\left( {\mathcal{U}} \right)} \right)$
\IF {$\left( {{u_{m,f^{*}}} \cup \mathcal{U}} \right) \subseteq \mathcal{I}$}
\STATE $u_{m,f^{*}}=1$
\STATE $\mathcal{U} \leftarrow \mathcal{U} \cup {u_{m,f^{*}}}$
\STATE $C_m\leftarrow C_m-s$
\STATE $CF \leftarrow CF/{f^*}$
\ELSE
\STATE Keep the current caching placement
\ENDIF
\ENDWHILE
\STATE \textbf{Output:} caching placement $\mathcal{U}$
\end{algorithmic}
\end{algorithm}
\par
In \textbf{Algorithm~\ref{cache_placement}}, the compare between the gains is based on merge-sort algorithm. In the worst case, the complexity of \textbf{Algorithm~\ref{cache_placement}} is $\mathcal {O} \left( {M\left( {F\left( {\ln F + 1} \right) + I} \right)} \right)$, where $I = H/s$.
\begin{remark}\label{greed_remark}
In \textbf{Algorithm~\ref{cache_placement}}, the decision whether a content can be cached in the UAV is decided by the gain the content can bring. In each iteration, we cache a content in the UAV that maximizes $Q$. The greedy algorithm is guaranteed to find a solution within a $\left( {1 - {1 \mathord{\left/
 {\vphantom {1 e}} \right.\kern-\nulldelimiterspace} e}} \right)$ factor of the optimal
solution. Although we do not have accuracy guarantees of the performance gap between the solution obtained from \textbf{Algorithm 2} and the optimal solution, we show that this performance gap on MOS is quite small in Section IV.
\end{remark}
\subsection{Lagrange Dual Algorithm for User Association}
After solving the UAV deployment and caching placement, we can take $\mathcal {X}$ and $\mathcal {U}$ as fixed matrix. The user association problem is an optimization problem with 0-1 binary variable under constrains, which is denoted as,
\begin{align}\label{optimal_user_association}
\mathop {\max }\limits_a &\sum\limits_{m = 1}^M {\sum\limits_{k = 1}^K {{a_{m,k}}\ln \left( {\frac{1}{{{D_{m,k}}}}} \right)}} \\
\label{22a}
&{\rm{s}}{\rm{.t}}{\rm{.}}\sum\limits_{k=1}^K {{a_{m,k}} = {w_m}},\tag{\ref{optimal_user_association}a}\\
&~~~~~\eqref{25a},~\eqref{25d}.\notag
\end{align}
Let ${T_{m,k}} = {w_m \mathord{\left/{\vphantom {1 {{D_{m,k}}}}} \right.\kern-\nulldelimiterspace} {{D_{m,k}}}}$. Then:
\begin{equation}\label{t_mk}
{T_{m,k}} = \frac{1}{{\frac{{\sum\nolimits_{f = 1}^F {s{q_{k,f}}} }}{{B{{\log }_2}\left( {1 + S\!I\!N\!R_{m,n,k}} \right)}} + \frac{{\sum\nolimits_{f = 1}^F {\left( {1 - {u_{m,f}}} \right)s{q_{k,f}}} }}{{{B_{h}}{{\log }_2}\left( {1 + S\!I\!N\!R_{m,n}} \right)}}}}.
\end{equation}
\par
When the network scale is small, the optimal user association can be found through a brute force search. The complexity of the brute force algorithm is $\mathcal{O}\left( {M^K} \right)$, where $M$ and $K$ are the number of UAVs and users, respectively. The computation is essentially impossible for a modest-sized network. So we propose a low complexity algorithm to solve this problem. The only coupling constraint is $\sum\nolimits_k {{a_{m,k}} = {w_m}}$. This motivates us to turn to the Lagrange dual decomposition method whereby a Lagrange multiplier $\alpha$ is introduced to relax the coupling constraint~\cite{6497017}:
\begin{equation}
L\left( \alpha  \right){\rm{ = }}{f_a}\left( \alpha  \right) + {g_w}\left( \alpha  \right).
\end{equation}
The dual problem is
\begin{align}
\label{Lagrange}
\mathop {\min }\limits_\alpha  L&\left( \alpha  \right) = {f_a}\left( \alpha  \right) + {g_w}\left( \alpha  \right)\\
&{\rm{s}}{\rm{.t}}{\rm{.}}~\eqref{25a},~\eqref{25d},\notag
\end{align}
where
\begin{equation}
\label{aaa}
f\left( \alpha  \right) = \mathop {\max }\limits_a  \sum\limits_{m = 1}^M {\sum\limits_{k = 1}^K {{a_{m,k}}\left( {\ln \left( {{T_{m,k}}} \right) - {\alpha _m}} \right)}},
\end{equation}
\begin{equation}
g\left( \alpha  \right) = \mathop {\max }\limits_w  \sum\limits_{m=1}^M {{w_m}\left( {{\alpha _m} - \ln \left( {{w_m}} \right)} \right)}.
\end{equation}
\begin{equation}
\begin{aligned}
L\left( \alpha  \right){\rm{ = }}&\sum\limits_{m = 1}^M {\sum\limits_{k = 1}^K {{a_{m,k}}\ln \left( {{T_{m,k}}} \right)} }  - \sum\limits_{m = 1}^M {{w_m}\ln } \left( {{w_m}} \right) - \sum\limits_{m = 1}^M {{\alpha _m}\left( {\sum\limits_{k = 1}^K {{a_{m,k}}}  - {w_m}} \right)}.
\end{aligned}
\end{equation}
Given the dual variable $\alpha_m$, the solution of maximizing the Lagrangian with respect to $a_{m,k}$ can be explicitly obtained by
\begin{equation}\label{choose UAV}
{a_{m,k}} = \left\{ \begin{array}{l}1\quad \quad if\quad m = {m^ * },\\0\quad \quad otherwise,
\end{array} \right.
\end{equation}
where
\begin{equation}
{m^{*}} = \mathop {\arg \max }\limits_m \left( {\ln \left( {{T_{m,k}}} \right) - {\alpha _m}} \right).
\end{equation}
Taking the second-order derivative of the Lagrangian w.r.t. $w_m$ yields
\begin{equation}
\frac{{\partial {L^2}}}{{{\partial ^2}{w_m}}} =  - \frac{1}{{{w_m}}} < 0.
\end{equation}
This means that the Lagrangian is a concave function of $w_m$.
\begin{equation}
\frac{{\partial L}}{{\partial {w_m}}} = {\alpha _m} - \ln \left( {{w_m}} \right) - 1.
\end{equation}
By setting $\frac{{\partial L}}{{\partial {w_m}}} = 0$, the optimal value of $w_m$ is given by
\begin{equation}
\label{wm}
{w_m}^ *  = {e^{{\alpha _m} - 1}}.
\end{equation}
The value of the Lagrange multiplier $\alpha$ is updated by
\begin{equation}
\label{updatem}
{\alpha _m}\left( {t + 1} \right) = {\left[ {{\alpha _m}\left( t \right) - \delta \left( t \right)\left( {{w_m}\left( t \right) - \sum\limits_{k = 1}^K {{a_{m,k}}} } \right)} \right]^ + },
\end{equation}
where ${\left[ a \right]^ + } = \max \left\{ {a,0} \right\}$, $t$ is the iteration index, and $\delta \left( t \right)$ is dynamically chosen step size sequence based on some suitable estimates.\par {We propose the Lagrange dual algorithm to obtain the near-optimal user association as given in \textbf{Algorithm 3}~\cite{6497017}.}
\begin{algorithm}
\caption{Lagrange dual based user association algorithm}
\label{user_association}
\begin{algorithmic}[1]
\STATE \textbf{Initialize:} every user calculates $T_{m,k}$ according to~\eqref{t_mk}; $t=0$ and ${\alpha _m}\left( 0 \right),\forall m$. The initial user association is $\mathcal{A}_I$. The user association is denoted as $\mathcal{A}$. Let $\mathcal{A} = \mathcal{A}_I$
\REPEAT
\STATE Users choose the serving UAV according to~\eqref{choose UAV}
\STATE Update the user association $\mathcal{A}$
\STATE Calculate the corresponding ${w_m}^{*}\left( t \right)$ by~\eqref{wm}
for each UAV
\STATE Update ${\alpha _m}\left( {t + 1} \right)$ by~\eqref{updatem}
\STATE $t \leftarrow t + 1$
\UNTIL{Convergence}
\STATE \textbf{Output:} user association $\mathcal{A}$
\end{algorithmic}
 \vspace{-0.2em}
\end{algorithm}

The multiplier $\alpha$ works as a message between UAVs and users in the network. In fact, it can be interpreted as the price of the UAVs determined by the load situation, which can be either positive or negative. If we interpret $\sum\limits_k {{a_{m,k}}}$ as the serving demand for UAV $m$ and $w_m$ as the service UAV $m$ can provide. Then $\alpha_m$ is the bridge between demand and supply. From~\eqref{updatem}, if the demand $\sum\limits_k {{a_{m,k}}}$ is larger than the supply $w_m$, the price $\alpha_m$ will increase. On the contrary, the price $\alpha_m$ will decrease. Thus, when UAV $m$ is overloading, $\alpha_m$ will increase and fewer users will associate with it, while the price of other under-loaded UAVs will decrease so as to attract more users. \par
\emph{(1) Complexity:} At each iteration, the complexity of the distributed algorithm is $\mathcal{O} \left(MK\right)$. For a given number of total iteration $L^{'}$, the algorithm is guaranteed to converge to a near-optimal solution.\par
\emph{(2) Step Size and Convergence:} The step size is non-summable and diminishing~\cite{article1}. We assume that step size updates according to the following rule
\begin{equation}
\label{updatedelta}
\begin{aligned}
\delta \left( t \right) &= \lambda \left( t \right)\frac{{L\left( {\alpha \left( t \right)} \right) - L\left( t \right)}}{{{{\left\| {\partial L\left( {\alpha \left( t \right)} \right)} \right\|}^2}}}\\
&0 < \lambda  \le \lambda \left( t \right) \le \mathop \lambda \limits^ - < 2,
\end{aligned}
\end{equation}
where $\lambda$ and $\mathop \lambda \limits^ -$ are both scalars. $L\left( t \right)$ is an estimate of the optimal value ${L^*}$ of the optimization problem. $L\left( t \right)$ updates according to the following rule
\begin{equation}
\label{updateL}
L\left( t \right) = \mathop {\min }\limits_{0 \le \tau  \le t} L\left( {\alpha \left( t \right)} \right) - \varepsilon \left( t \right),
\end{equation}
$\varepsilon \left( t \right)$ updates according to the following rule
\begin{equation}
\label{updateepsilon}
\varepsilon \left( {t + 1} \right) = \left\{ \begin{array}{l}
\rho \varepsilon \left( t \right)\quad  if\;L\left( {\alpha \left( {t + 1} \right)} \right) \le L\left( {\alpha \left( t \right)} \right),\\
\max \left\{ {\beta \varepsilon \left( t \right),\varepsilon } \right\}\quad \,otherwise,
\end{array} \right.\;
\end{equation}
$\varepsilon$, $\beta$ and $\rho$ are all positive constants with $\beta  < 1$, $\rho >1$, respectively.
The target level of $L\left( t \right)$ can be obtained by appropriate $\varepsilon \left( t \right)$. Whenever the target level is achieved, we increase $\varepsilon \left( t \right)$ or keep it at the same value. If the target level is not attained, $\varepsilon \left( t \right)$ is dropped to the threshold $\varepsilon$. Hence, we have the following theorem.
\begin{theorem}
\label{theorem}
Assume that the step size $\delta \left( t \right)$ is updated by~\eqref{updatedelta} with the adjustment procedure~\eqref{updateL} and~\eqref{updateepsilon}. If the optimal value ${L^{\rm{*}}} >  - \infty $ then
\begin{equation}
\mathop {\inf }\limits_t L\left( t \right) \le {L^*} + \varepsilon.
\end{equation}
\end{theorem}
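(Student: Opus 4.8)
The plan is to recognize the recursion as a projected subgradient method applied to the convex dual objective $L(\alpha)$ and to invoke the classical analysis of the Polyak-type step size with a dynamically adjusted target level. First I would observe that update~\eqref{updatem} can be written as $\alpha(t+1) = [\alpha(t) - \delta(t)\,\partial L(\alpha(t))]^+$, where the vector with components $w_m(t) - \sum_{k} a_{m,k}$ is a subgradient of $L$ at $\alpha(t)$ and $[\cdot]^+$ is the (nonexpansive) projection onto the nonnegative orthant. Writing $L_{\mathrm{rec}}(t) = \min_{0\le\tau\le t} L(\alpha(\tau))$ so that the target level reads $L(t) = L_{\mathrm{rec}}(t) - \varepsilon(t)$, I note that the claimed bound $\inf_t L(t) \le L^* + \varepsilon$ follows once I establish the record-value estimate $\inf_t L(\alpha(t)) \le L^* + \varepsilon$, since $\varepsilon(t) \ge \varepsilon$ holds for all $t$ under~\eqref{updateepsilon}.

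The backbone of the argument is the standard one-step inequality: for any feasible $y$, nonexpansiveness of the projection together with the subgradient inequality yields $\|\alpha(t+1)-y\|^2 \le \|\alpha(t)-y\|^2 - 2\delta(t)\,[L(\alpha(t))-L(y)] + \delta(t)^2\,\|\partial L(\alpha(t))\|^2$. Substituting the step size~\eqref{updatedelta} turns this into a relation governed by the gap $L(\alpha(t)) - L(t)$ and the scaling $\lambda(t) \in [\underline\lambda,\overline\lambda] \subset (0,2)$. I would then argue by contradiction: assume $\inf_t L(\alpha(t)) > L^* + \varepsilon$ and derive a contradiction with $\|\alpha(t)-\alpha^*\|^2 \ge 0$ for an optimal $\alpha^*$.

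The crux is to control the adjustment rule~\eqref{updateepsilon}. I would first show that the target level is attained, i.e. $L(\alpha(t+1)) \le L(t)$, only finitely often: each such event forces $L_{\mathrm{rec}}$ to drop by at least $\varepsilon(t) \ge \varepsilon$, so infinitely many attainments would send $L_{\mathrm{rec}}(t) \to -\infty$, contradicting $L^* > -\infty$. Consequently, beyond some index the ``otherwise'' branch of~\eqref{updateepsilon} always fires, so $\varepsilon(t)$ shrinks geometrically and equals the threshold $\varepsilon$ from some $\hat t$ onward, while $L_{\mathrm{rec}}(t)$ converges to $\inf_t L(\alpha(t))$. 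Under the contradiction hypothesis this limit exceeds $L^* + \varepsilon$, which guarantees $L(t) = L_{\mathrm{rec}}(t) - \varepsilon > L^*$ for $t \ge \hat t$; hence $L(\alpha(t)) - L(t) \ge \varepsilon$ and $L(\alpha(t)) - L^* > L(\alpha(t)) - L(t) > 0$ hold simultaneously. Feeding both facts into the substituted inequality with $y = \alpha^*$ collapses the per-step decrease term to at most $\lambda(t)(\lambda(t)-2)\,(L(\alpha(t))-L(t))^2/\|\partial L(\alpha(t))\|^2$, which—using $\lambda(t)<2$, $L(\alpha(t))-L(t)\ge\varepsilon$, and a uniform bound $\|\partial L(\alpha(t))\|\le G$ on the subgradient norms—is bounded above by a strictly negative constant $-c$.

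Summing this inequality from $\hat t$ onward then gives $\|\alpha(T)-\alpha^*\|^2 \le \|\alpha(\hat t)-\alpha^*\|^2 - c\,(T-\hat t)$, whose right-hand side tends to $-\infty$ and contradicts nonnegativity of the norm. This refutes the hypothesis and establishes $\inf_t L(\alpha(t)) \le L^* + \varepsilon$, from which the stated bound on the target level is immediate. I expect the main obstacle to be the bookkeeping around the adjustment rule: arguing precisely that the target is hit only finitely often and that $\varepsilon(t)$ settles at $\varepsilon$, and then securing that the bracket $-2(L(\alpha(t))-L^*)+\lambda(t)(L(\alpha(t))-L(t))$ is uniformly negative, which itself hinges on having first proved $L(t) > L^*$. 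A secondary technical point I would need to pin down is the uniform bound $G$ on $\|\partial L(\alpha(t))\|$ along the trajectory, which the step-size definition~\eqref{updatedelta} implicitly presupposes.
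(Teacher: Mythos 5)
Your proposal is correct, but it takes a more self-contained route than the paper. The paper's own proof of Theorem~\ref{theorem} is essentially a two-line verification-plus-citation: it computes the subgradient components $\frac{\partial L}{\partial \alpha_m} = w_m(\alpha) - \sum_{k} a_{m,k}(\alpha)$, observes that both terms are bounded so that $\sup_t \|\partial L(\alpha(t))\| \le a$ for some scalar $a$, and then invokes Proposition 6.3.6 of the cited convex-optimization reference, which is precisely the convergence result for the subgradient method with a Polyak-type step size and a path-based target-level adjustment. What you have written is, in effect, the proof of that cited proposition: the nonexpansive-projection one-step inequality, the contradiction hypothesis $\inf_t L(\alpha(t)) > L^* + \varepsilon$, the observation that the target level can be attained only finitely often (else $L_{\mathrm{rec}}(t) \to -\infty$, contradicting $L^* > -\infty$) so that $\varepsilon(t)$ settles at the floor $\varepsilon$, and the summed strictly negative per-step decrease contradicting $\|\alpha(T)-\alpha^*\|^2 \ge 0$. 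This matches the classical argument step for step, and your reduction of the claim about the target level $L(t)$ to the record-value estimate on $L(\alpha(t))$ is sound. The trade-off is clear: the paper's proof is short but opaque (all the work is outsourced to the reference), while yours is longer but exposes exactly which hypotheses matter. The one item you defer --- the uniform bound $G$ on $\|\partial L(\alpha(t))\|$ --- is the only thing the paper actually proves, and it follows here because the subgradient components are differences of the bounded quantities $w_m$ and $\sum_k a_{m,k}$; to make your version fully rigorous you would need to add that short verification rather than leave it as a ``secondary technical point.''
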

\begin{proof}
See Appendix B for a detailed proof.
\end{proof}

\subsection {Suboptimal Solution for Optimization Problem}
Since we solve the optimization problem by decomposing the problem into three sub-problems, we can obtain the {suboptimal} solution by alternate iteration based on the algorithms proposed above. The proposed joint UAV deployment, user association and caching placement algorithm is described as follows.
\begin{algorithm}
\caption{Joint UAV deployment, user association and caching placement algorithm}
\label{maxMOS}
\begin{algorithmic}[1]
\STATE \textbf{Initialize:} UAV deployment $\mathcal{X}$, user association $\mathcal{A}$ and caching placement $\mathcal{U}$, $l=1$, $MOS\left( 0 \right) = 0$
\REPEAT
\STATE update $\mathcal{X}$ by Algorithm \ref{MLD}
\STATE update $\mathcal{U}$ by Algorithm \ref{cache_placement}
\STATE update $\mathcal{A}$ by Algorithm \ref{user_association}
\STATE calculate $MOS\left( l \right)$ by~\eqref{op_problem}
\STATE $l \leftarrow l+1$
\UNTIL{$\left| {MOS\left( l \right) - MOS\left( {l - 1} \right)} \right| < \delta$}
\STATE \textbf{End}
\end{algorithmic}
\end{algorithm} {
\begin{theorem}\label{theorem3}
Algorithm \ref{cache_placement} can yield an increasing objective value in each iteration until convergence.
\end{theorem}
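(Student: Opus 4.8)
The statement asserts that the value produced by the iterative scheme does not fall from one pass to the next until the stopping test fires; the plan is to establish this as a monotone-plus-bounded convergence result. Writing $V=\sum_{m}\sum_{k}a_{m,k}Q_{m,k}$ for the transformed objective of \eqref{optimal1} (an increasing affine image of the tracked quantity $MOS(l)$ in \eqref{op_problem}, since $C_1>0$), I would show that every block update performed in one pass of Algorithm \ref{maxMOS} leaves $V$ unchanged or increases it, and that $V$ admits a finite upper bound; a non-decreasing sequence bounded above converges, which is exactly the assertion.

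First I would handle the caching block, Algorithm \ref{cache_placement}, which is the step named in the theorem. By Theorem \ref{lemma_cp} the per-UAV objective $Q(u)$ is monotone and submodular, so in each pass of its WHILE loop the insertion of the maximal-marginal-gain content $f^{*}$ satisfies $Q(\mathcal{U}\cup u_{m,f^{*}})\ge Q(\mathcal{U})$; the per-UAV value therefore never decreases, and because each cache holds at most $H/s$ contents the loop halts after finitely many insertions. Summing over $m$ shows the caching block does not decrease $V$.

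For the deployment block, Algorithm \ref{MLD} executes a swap only on a blocking pair, and by the two-sided exchange-stability definition no involved agent's utility drops while at least one strictly increases; since the aggregate of these utilities is exactly $V$, this block is non-decreasing as well. For the association block, Algorithm \ref{user_association} reassigns each user by the per-user argmax of its Lagrangian term and updates the price $\alpha_m$, with Theorem \ref{theorem} guaranteeing the dual iterates reach within $\varepsilon$ of the optimum, so this block likewise does not reduce $V$. Boundedness is immediate from \eqref{MOS_model}: each $MOS_{m,k}\in[1,5]$ and exactly one $a_{m,k}=1$ per user by \eqref{25d}, whence $MOS(l)\le 5K$. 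A non-decreasing real sequence bounded above converges, so the increments $|MOS(l)-MOS(l-1)|$ tend to zero, the stopping test is eventually met, and the scheme terminates after finitely many passes.

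The hard part will be the caching block once it is placed inside the outer loop rather than read in isolation. Algorithm \ref{cache_placement} cold-starts from $\mathcal{U}\leftarrow\emptyset$, so after the deployment has just changed within the same pass the recomputed greedy placement carries only a $(1-1/e)$ guarantee relative to the optimum and is not automatically at least as good as the incumbent placement inherited from the previous pass — the approximation bound compares greedy to the optimum, never to an incumbent. To make the per-pass non-decrease rigorous I would either warm-start the greedy from the incumbent (then monotonicity of $Q$ forces $V$ up) or add an accept-if-improved rule that retains the better of the incumbent and the fresh placement; either fix keeps the $(1-1/e)$ quality while supplying the monotonicity the convergence argument needs. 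I would flag this explicitly, because without it the bounded-sequence reasoning only delivers convergence along a subsequence rather than of the full objective sequence.
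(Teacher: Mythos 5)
You have correctly isolated the crux of this theorem --- and then left it open. Theorem~3 is precisely the claim that running Algorithm~2 cold from $\mathcal{U}=\emptyset$ inside outer iteration $i$ produces a placement $\mathcal{S}^{(i)}$ satisfying $M(\mathcal{X}^{(i)},\mathcal{S}^{(i)},\mathcal{A}^{(i-1)}) \ge M(\mathcal{X}^{(i)},\mathcal{S}^{(i-1)},\mathcal{A}^{(i-1)})$, i.e., that the freshly computed greedy placement is no worse than the incumbent one under the same deployment and association. Your proposal explicitly declares this step not provable for the algorithm as written and substitutes a modified procedure (warm start from the incumbent, or an accept-if-improved rule). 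That amounts to proving a different statement about a different algorithm. The surrounding scaffolding you build (non-decrease of the swap-matching and Lagrange-dual blocks, boundedness of the MOS, monotone-plus-bounded convergence) reproduces the paper's Appendix~D convergence argument, but Appendix~D invokes Theorem~3 for exactly the step you skip, so the gap is load-bearing.

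The idea that closes it --- and that the paper's Appendix~C exploits, via a contradiction argument built on per-content benefits $\Omega_{m,f}$ --- is that once $\mathcal{X}$ and $\mathcal{A}$ are fixed, the caching objective is separable across contents. Each user requests exactly one content, so $\sum_{k} a_{m,k}Q_{m,k}$ decomposes into a sum of fixed per-content gains $\Omega_{m,f}$ over the cached set: the set function is in fact modular, not merely monotone submodular. Under the pure cardinality constraint of $H/s$ items per UAV, the marginal gain of a content is independent of the current cache, so the greedy rule in Algorithm~2 simply selects the top-gain contents and returns an \emph{optimal} placement for the current $(\mathcal{X}^{(i)},\mathcal{A}^{(i-1)})$ --- in particular one at least as good as the incumbent $\mathcal{S}^{(i-1)}$ evaluated under the same parameters. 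No warm start or accept-if-improved modification is needed. Your reliance on the $(1-1/e)$ approximation bound is what forces the detour: that bound is the right tool for a genuinely submodular objective where greedy may fall short of the optimum, but here the stronger modular structure is available, and it is what the theorem actually rests on.
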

\begin{proof}
See Appendix C for a detailed proof.
\end{proof}
}
In the following, we discuss the complexity and the convergence of \textbf{Algorithm~\ref{maxMOS}}.\par {\emph{(1) Complexity:}
During each iteration, three subproblem algorithms are performed to solve three subproblems. The complexity of \textbf{Algorithm~\ref{MLD}},~\textbf{\ref{cache_placement}},~\textbf{\ref{user_association}} has been analyzed above in Subsection A, B, C. We assume that the proposed algorithm can obtain the suboptimal solution with G iterations. So the complexity of \textbf{Algorithm~\ref{maxMOS}} is $\mathcal{O}\left(G  \left( {MN^2L + M\left( {F\left( {\ln F + 1} \right) + I} \right)+ MKL^{'}}\right)\right)$, where $L$ and $L^{'}$ are the iteration number given in \textbf{Algorithm~\ref{MLD}} and \textbf{Algorithm~\ref{user_association}}, respectively.}\par {\emph{(2) Convergence:} The proposed algorithm can reach convergence after several iterations.
 \begin{proof}
 See Appendix D for a detailed proof.
 \end{proof}}
\begin{remark}\label{tradeoff}
For the optimization problem, we decompose it into three sub-problems and propose three low complexity algorithms, respectively. The proposed algorithm makes a tradeoff between the network performance and the computational complexity. The complexity of the proposed algorithm is greatly reduced at the cost of very small network performance degradation.
\end{remark}
\section{Numerical Results}
In this section, the effectiveness of the proposed algorithm is verified by comparing with two benchmark algorithms. {In the simulation, K users are randomly distributed in a cell and a MBS is deployed 1km away from the cell.} The coverage radius of each UAV is $100$~m. The height of UAV is ranging from 45~m to 60~m. {The simulation area is divided into $N$ areas, and the candidate positions in each sub-area are uniformly distributed.} In the practical application, the candidate locations can be determined in advance according to the coverage region and the obstacles of a certain hotspot. {The popularity of $F$ contents follows a \emph{Zipf}-like distribution. Without loss of generality, we rank these contents in a descending order according to their popularities. The popularity of the $i^{\rm{th}}$ content is denoted as
\begin{equation}
{\rho _i} = \frac{{{1 \mathord{\left/
 {\vphantom {1 {{i^\gamma }}}} \right.
 \kern-\nulldelimiterspace} {{i^\gamma }}}}}{{\sum\nolimits_{f = 1}^F {{1 \mathord{\left/
 {\vphantom {1 {{f^\gamma }}}} \right.
 \kern-\nulldelimiterspace} {{f^\gamma }}}} }},
\end{equation}
where the Zipf parameter $\gamma$ determines the skewness in the users' preference. The pathloss of LoS and NLoS link is denoted as
\begin{equation}
\label{pathloss}
\small{
{PL} = \left\{ \begin{array}{l}
30.9 + \left( {22.25 - 0.5{{\log}_{10}}{h}} \right){\log _{10}}{d} + 20{\log _{10}}{f_c}+{\mu _{los}},\rm{for\,~LoS\,~link,}\\
\max \left\{ {PL^{LOS},32.4 + \left( {43.2 - 7.6{{\log }_{10}}{h}} \right){{\log }_{10}}{d} + 20{{\log }_{10}}{f_c}} \right\}+{\mu _{Nlos}},\rm{for\,~NLoS\,~link.}
\end{array} \right.}
\normalsize
\end{equation}
The LoS/NLoS link is stochastically determined by the LoS possibility $P_{LOS}$ which is denoted as
\begin{equation}
\label{possibility}
{P_{LOS}} = \left\{ \begin{array}{l}
1,\;if\,\sqrt {{d^2} - {h^2}}  \le {d_0},\\
\frac{{{d_0}}}{{\sqrt {{d^2} - {h^2}} }} + \exp \left\{ {\left( {\frac{{ - \sqrt {{d^2} - {h^2}} }}{{{p_1}}}} \right)\left( {1 - \frac{{{d_0}}}{{\sqrt {{d^2} - {h^2}} }}} \right)} \right\},\;if\,\sqrt {{d^2} - {h^2}}  > {d_0},
\end{array} \right.
\end{equation}
where $d$ is the distance between UAV and user/MBS which is related with the UAV deployment location, $f_c$ is the carrier frequency, $\mu_{LoS}$ and $\mu _{Nlos}$ are the shadowing random variable for LoS link and NLoS link, respectively. ${d_0} = \max \left[ {295.05{{\log }_{10}}h - 432.94,18} \right]$, ${\mu _{los}} = 4.64{e^{ - 0.0066h}}$, ${\mu _{Nlos}} = 6$ and ${p_1} = 233.98{\log _{10}}h - 0.95$. The NLoS probability is ${P_{NLoS}} = 1-{P_{LOS}}$. This model holds for the given altitude $22.5~\rm{m} \le h \le 300~\rm{m}$. The parameter setting for UAV-user and MBS-UAV communication is based on 3GPP~\cite{3gpp}. It is obvious that UAV altitude can influence the value of pathloss. We will investigate the influence of UAV height by simulation to further optimize the system design.} The detailed simulation parameters are given in Table II.
\begin{table}[!htbp]
  \begin{center}
      \caption{Simulation Parameters}
    \begin{tabular}{|p{7cm}<{\centering}|c|}
      \hline
      \textbf{Parameter} & \textbf{Value}  \\       \hline
      Number of UAVs & 4\\ \hline
      Number of candidate locations of UAVs & 12\\ \hline
      Number of contents & 200\\ \hline
      Size of each content & 10 Mbits\\ \hline
      Zipf parameter $\gamma$ & 0.6, 1\\ \hline
      Convergence gap & $10^{-3}$\\ \hline
      Bandwidth & 20 MHz\\ \hline
      Carrier frequency & 2 GHz\\ \hline
      Transmit power of UAV & 23 dBm\\ \hline
      Transmit power of MBS & 46 dBm\\ \hline
      Variance of the Gaussian noise $\sigma ^2$ & -174~dBm/Hz\\ \hline
     \end{tabular}
  \end{center}
\end{table}

 We compare the proposed algorithm with two benchmark algorithms, classic algorithm and  random algorithm. In the classic algorithm, the UAV deployment is subject to uniform distribution, caching placement is decided by max-popular caching placement, and user association is decided by max-C/I access. In the random algorithm, the UAV deployment, caching placement and user association are all subject to the random distribution. We demonstrate the effectiveness of the proposed algorithm on the network performance, including the average MOS, UAV backhaul traffic offloading ratio. The average MOS of users is denoted as
\begin{equation}
MOS_{ave} = \frac{1}{K}\sum\limits_{m = 1}^M {\sum\limits_{k = 1}^K {{a_{m,k}}MO{S_{m,k}}}}.
\end{equation} {In the simulation process, we first calculate $\ln \left( {{1 \mathord{\left/
 {\vphantom {1 {{D_{m,k}}}}} \right.
 \kern-\nulldelimiterspace} {{D_{m,k}}}}} \right)$. Prober $C1$ and $C2$ are set  according to $\ln \left( {{1 \mathord{\left/
 {\vphantom {1 {{D_{m,k}}}}} \right.
 \kern-\nulldelimiterspace} {{D_{m,k}}}}} \right)$ so that the MOS value is between 1 and 5.}\par
The UAV backhaul traffic offloading ratio ratio is denoted as
\begin{equation}
\label{offloading}
O = \frac{1}{K}\sum\limits_{m = 1}^M {\sum\limits_{k = 1}^K {{a_{m,k}}{q_{k,f}}{u_{m,f}}}}.
\end{equation}
\begin{figure} [htbp!]
\centering
\includegraphics[width=0.6\linewidth]{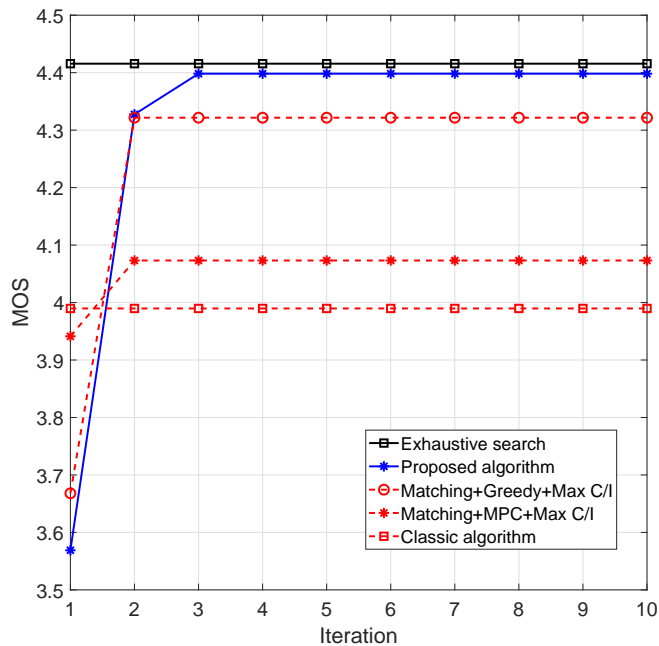}
 \caption{Convergence of the proposed algorithm.}\label{conver}
 \vspace{-2em}
\end{figure}
First, we demonstrate the convergence of the proposed algorithm in small-scale networks where the number of users is 10. As shown in Fig.~\ref{conver}, the proposed algorithm can reach the convergence within 4 iterations. The result of the proposed algorithm can reach the near optimal value of exhaustive search algorithm. The gap between the average MOS of exhaustive search algorithm and proposed algorithm is less than $0.02$. Hence, \textbf{Remark~\ref{greed_remark}} and \textbf{Remark~\ref{tradeoff}} are both proved. We also compare the proposed algorithm with classic algorithm and show the improvement of algorithms of three sub-problems, i.e., one-to-one swap matching algorithm, greedy algorithm and Lagrange dual algorithm, respectively. From Fig.~\ref{conver}, the algorithms of three sub-problems can all improve the average MOS. The improvement brought by greedy algorithm is larger than another two sub-problem algorithms. The caching placement strategy makes the most important role in the proposed algorithm.\par
\begin{figure} [htbp!]
\centering
\includegraphics[width=0.6\linewidth]{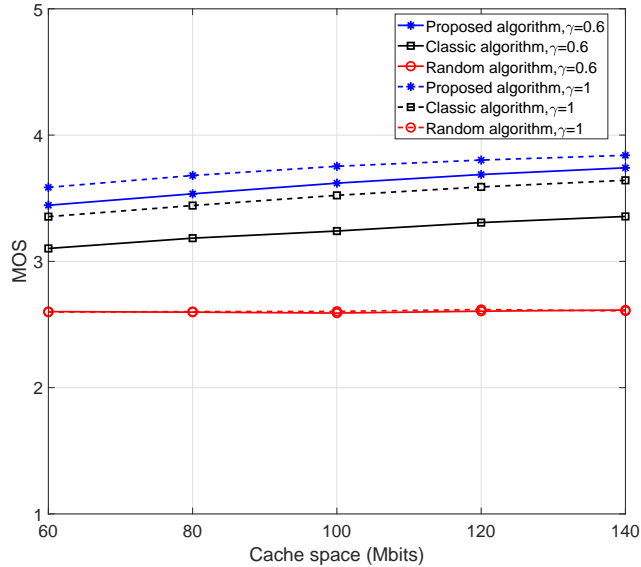}
 \caption{Average MOS of users with varying cache space.}\label{cs_mos}
 \vspace{-1.8em}
\end{figure}
\begin{figure} [htbp!]
\centering
\includegraphics[width=0.6\linewidth]{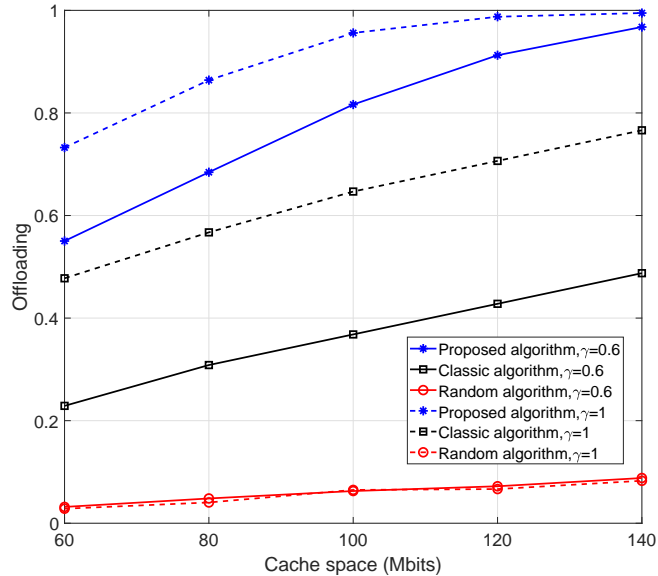}
 \caption {UAV backhaul traffic offloading ratio with varying cache space.}\label{cs_offloading}
 \vspace{-2em}
\end{figure}
Then we demonstrate the effectiveness of the proposed algorithm with varying cache space size. We set that the cache space of each UAV is ranging from 60 to 140~Mbits, the Zipf parameter $\gamma$ is 0.6 and 1, and the number of users is $100$.
The simulation results in Fig.~\ref{cs_mos} and Fig.~\ref{cs_offloading} show that the average MOS and UAV backhaul traffic offloading ratio of the proposed algorithm are improved compared with the classic algorithm and the random algorithm. The average MOS and UAV backhaul traffic offloading ratio increases as cache place increases. {For random algorithm, the average MOS almost remains static with varying cache space size and Zipf parameter, while the UAV backhaul traffic offloading ratio increases slowly as cache space increases.} Besides, the simulation results show that, compared with the cases of $\gamma=0.6$, all the three algorithms with $\gamma=1$ achieve better system performance. This is because the users have more requests concentrating on the most popular contents with $\gamma=1$ than $\gamma=0.6$, since the Zipf parameter $\gamma$ determines the skewness of content popularity. In Fig.~\ref{cs_offloading}, the traffic offloading of the proposed algorithm is close to $1$ when $\gamma=1,~H=140$~Mbits, which means that most of the contents requested by users have been cached in UAVs and do not need to be fetched from the MBS through the backhaul link of UAVs.\par
\begin{figure} [htbp!]
\centering
\includegraphics[width=0.6\linewidth]{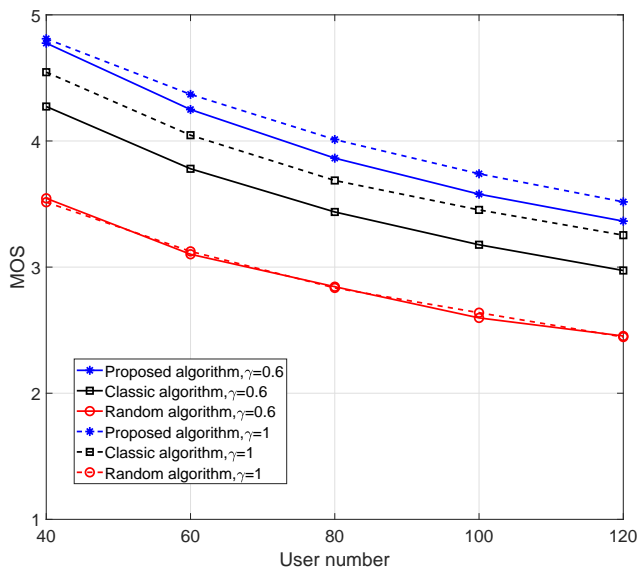}
 \caption{Average MOS of users with varying user number.}\label{ue_mos}
 \vspace{-1.8em}
\end{figure}
\begin{figure} [htbp!]
\centering
\includegraphics[width=0.6\linewidth]{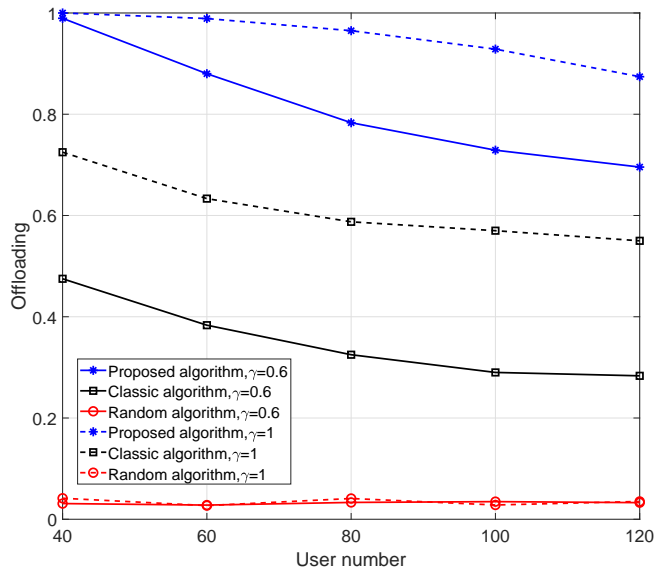}
 \caption{UAV backhaul traffic offloading ratio with varying user number.}\label{ue_offloading}
\vspace{-2em}
\end{figure}
\vspace{-0.1em}
Next, we demonstrate the effectiveness of the proposed algorithm with varying user number ranging from 40 to 120. In the simulation, we assume that the cache space of each UAV is $H = 100$~Mbits. Other parameters remain unchanged. As shown in Fig.~\ref{ue_mos} and Fig.~\ref{ue_offloading}, all the system performance indicators of the proposed algorithm are greatly improved compared with the other two benchmark algorithms. It is obvious that the system performance will be deteriorated as the number of users in the system increases. In Fig.~\ref{ue_mos}, the performance gap of the average MOS between $\gamma = 1$ and $\gamma = 0.6$ of the proposed algorithm is smaller than that of the classic algorithm. This result indicates the necessity and advantage of the proposed caching placement in our algorithm. When user number is $40$, the average MOS of the proposed algorithm have almost no difference with $\gamma = 0.6$ and $\gamma = 1$. For the random algorithm, the Zipf parameter $\gamma$ has almost no effect on the system performance indicators. In Fig.~\ref{ue_offloading}, the UAV backhaul traffic offloading ratio of the classic algorithm and the proposed algorithm with $\gamma = 1$ decreases more slowly than that with $\gamma = 0.6$ since the requests of users are more concentrated with $\gamma=1$, meanwhile, the UAV backhaul traffic offloading ratio of the random algorithm is almost unchanged as the number of users increases.\par
\begin{figure} [htbp!]
\centering
\includegraphics[width=0.6\linewidth]{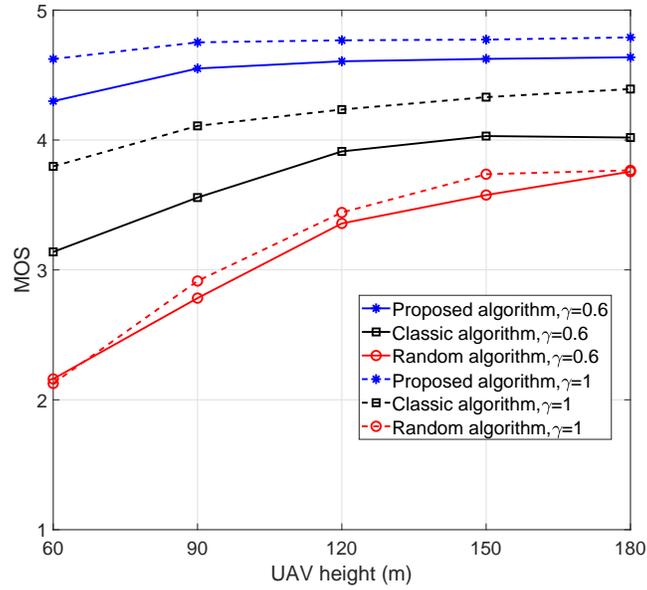}
 \caption{Average MOS of users with varying UAV height.}\label{height_mos}
 \vspace{-1.8em}
\end{figure}
\begin{figure} [htbp!]
\centering
\includegraphics[width=0.6\linewidth]{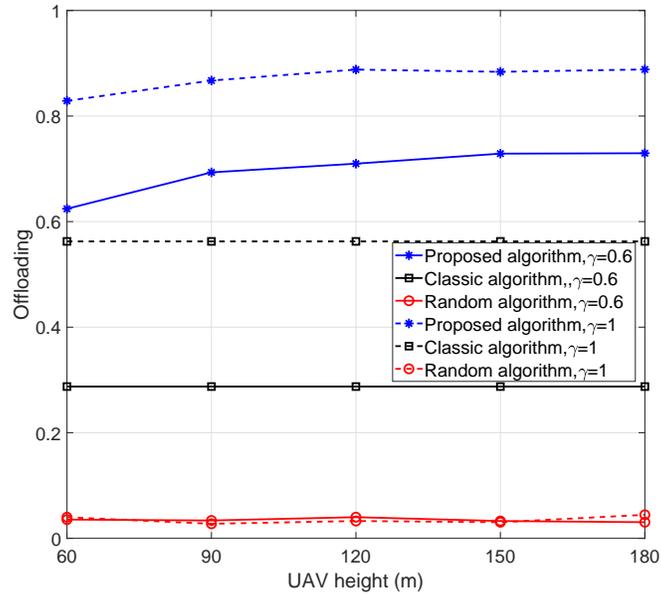}
 \caption{UAV backhaul traffic offloading ratio with varying UAV height.}\label{height_offloading}
\vspace{-2em}
\end{figure}
\vspace{-0.1em} We consider the impact of the UAV height on the performance of our proposed algorithm. UAV height varies from $60$ to $180$ and other parameters remain unchanged. As shown in Fig.~\ref{height_mos} and Fig.~\ref{height_offloading}, the system performance increases when UAV height is between $60$ and $120$. When UAV height is ranging from $120$ to $180$, the performance gains almost remain unchanged. \textcolor{blue}{According to~\eqref{possibility}, the probability of Los link between user and UAV increases with UAV height increases. The pathloss under LoS link is smaller than that of NloS link. So the system performance will be better under LoS link. According to~\eqref{pathloss}, it is obvious that the increase of UAV has little impact on the pathloss since $log_{10}h$ increases slowly as UAV height increases. Increasing the UAV height helps improve system performance at low-to-medium UAV altitude, but does not have a great effect on them in the high altitude.} Proper UAV height can significantly yield system performance gains.\par
\begin{figure} [htbp!]
\centering
\includegraphics[width=0.6\linewidth]{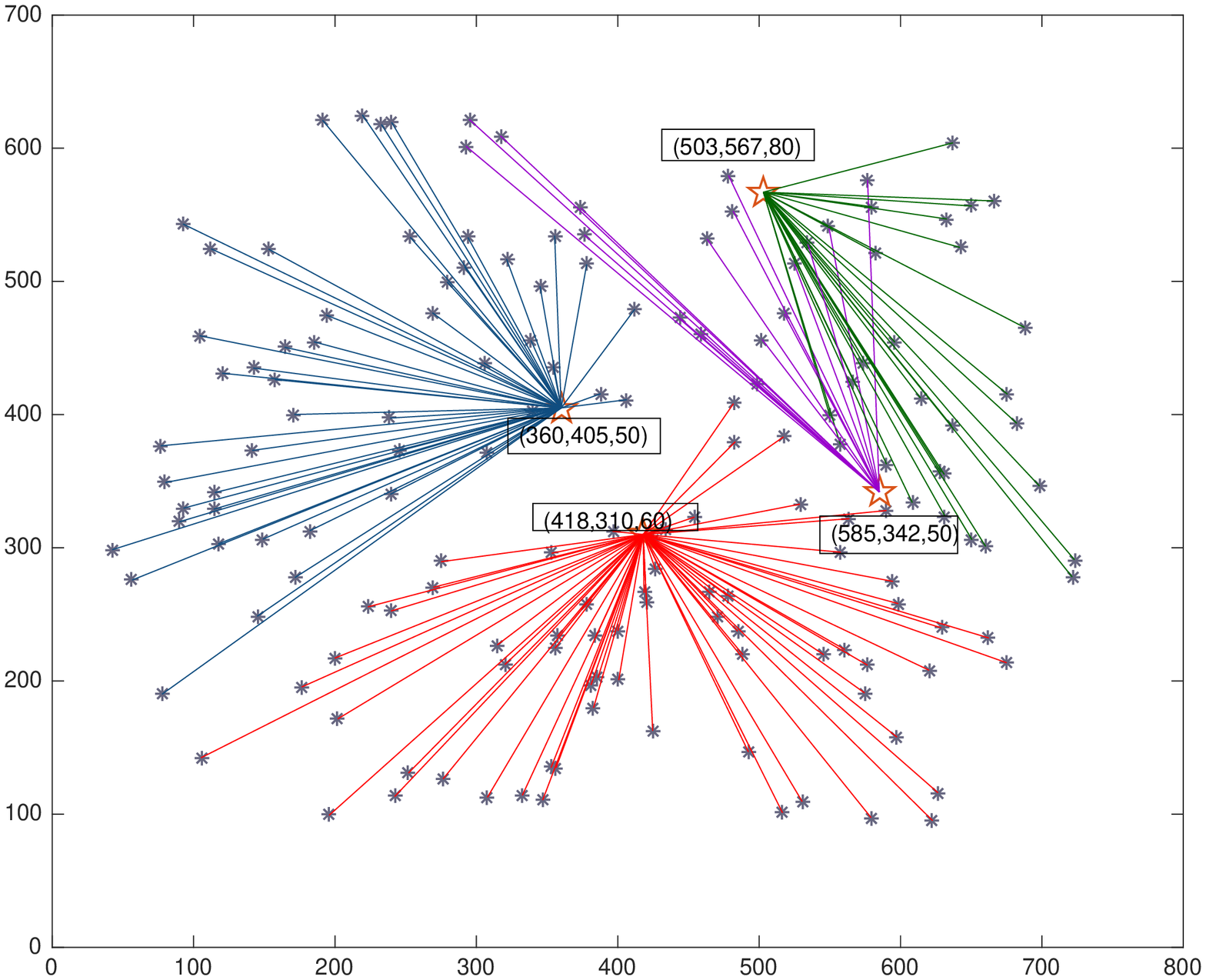}
 \caption{User association by Lagrange dual algorithm.}\label{la}
 \vspace{-1.8em}
\end{figure}
\begin{figure} [htbp!]
\centering
\includegraphics[width=0.6\linewidth]{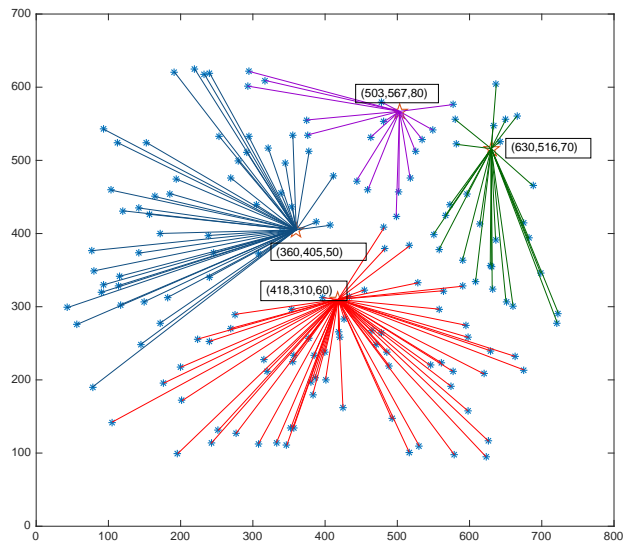}
 \caption{User association by max C/I algorithm.}\label{maxci}
\vspace{-2em}
\end{figure}
\vspace{-0.1em} The solution of user association by Lagrange dual algorithm and max C/I algorithm are shown in Fig.~\ref{la} and Fig.~\ref{maxci}, respectively. Fig.~\ref{conver} shows that user association can significantly improve system performance.  For convenience, the terrestrial location of UAV is shown. In Fig.~\ref{la}, the UAV deployment, cache placement and user association influence each other, which also shows the necessity of our research. However, users are usually associated with nearby UAV in Fig.~\ref{maxci} with max C/I Algorithm. \textcolor{blue}{The max C/I Algorithm, associated the nearest UAV, is suboptimal since it takes only SINR into account and ignores the bandwidth allocation for each user. According to~(2), the downlink transmission rate is related to both SINR and bandwidth allocation. When a large number of users are associated with one UAV, the downlink transmission rate of each user will be greatly reduced.} It also reveals that user association is quite important in the UAV-assisted cellular network design.
\section{Conclusion}
In this paper, we have investigated the joint optimization of UAV deployment, caching placement and user association in UAV-assisted cellular networks. {We formulated an optimization problem and proposed a low complexity suboptimal algorithm since the formulated problem is a combinatorial non-convex optimization problem. We demonstrated the convergence and network performance to verify the feasibility and effectiveness of the proposed algorithm by simulation results. From the simulation results, the caching placement can greatly improve the system performance, which also confirms the advantages of edge caching. The UAV placement and resource allocation are vital for providing an excellent channel condition to users when designing the UAV-assisted cellular network. In the future work, we would pay more attention to mobile UAV scenario. Multiple UAVs' trajectory optimization without collision and power control is a promising research direction to improve the coverage and QoE of users.}

\begin{appendices}
\section*{Appendix~A: Proof of Theorem~\ref{lemma_cp}}\label{greedy_prove}
\renewcommand{\theequation}{A.\arabic{equation}}
\setcounter{equation}{0}
\begin{proof}
Since non-negative linear combination of multiple monotone and submodular functions is closed, we only need to prove $Q_{m,k}\left( u\right )$ is a monotone and submodular function.\par
\begin{equation}
Q_{m,k} =  - \ln \left( {D_{m,k}^a + \frac{{\sum\nolimits_{f=1}^F {\left( {1 - {u_{m,f}}} \right)s{q_{k,f}}} }}{b_{m,n}}} \right).
\end{equation}
Since $\mathcal{X}$ and $\mathcal{A}$ are both fixed matrix, the downlink radio transmission delay $D_{m,k}^a$ is a constant according to~\eqref{Da}. For $\mathcal{X} \subseteq \mathcal{Y} \subseteq \mathcal{T}$, it is obvious ${Q_{m,k}}\left( \mathcal{X} \right) \le {Q_{m,k}}\left( \mathcal{Y} \right)$. So $Q_{m,k}\left( u\right )$ is a monotonically increasing function.\par
When ${\cal X}{\rm{ = }}{\cal Y} \subseteq {\cal T}$, the marginal gain obtained by adding $\left\{ x \right\}$ is the same.
\begin{equation}
{Q_{m,k}}\left( {{\cal X} \cup \left\{ x \right\}} \right) - {Q_{m,k}}\left( {\cal X} \right) \ge {Q_{m,k}}\left( {{\cal Y} \cup \left\{ x \right\}} \right) - {Q_{m,k}}\left( {\cal Y} \right).
\end{equation}
Thus the condition is satisfies.
\par
When ${\cal X} \subset {\cal Y} \subseteq {\cal T}$, the contents requested by users can affect the value of $Q_{m,k}$ and a user can only request a content. We discuss the effect of adding $\left\{ x \right\}$ on the marginal gain in classification.
\begin{enumerate}[(1)]
  \item If $\left\{ x \right\}$ contains the content requested by user $k$, namely the content is not contained in ${\cal X}$ and ${\cal Y}$.
  \begin{equation}
  {Q_{m,k}}\left( {\cal X} \right){\rm{ = }}{Q_{m,k}}\left( {\cal Y} \right){\rm{ = }} - \ln \left( {D_{m,k}^a + \frac{{\sum\nolimits_{f=1}^F {s{q_{k,f}}} }}{{{b_{m,n}}}}} \right),
  \end{equation}
  \begin{equation}
  {Q_{m,k}}\left( {{\cal X} \cup x} \right){\rm{ = }}{Q_{m,k}}\left( {{\cal Y} \cup x} \right){\rm{ = }} - \ln \left( {D_{m,k}^a} \right).
  \end{equation}
  Then, we have
  \begin{equation}
  {Q_{m,k}}\left( {{\cal X} \cup \left\{ x \right\}} \right) - {Q_{m,k}}\left( {\cal X} \right) \ge {Q_{m,k}}\left( {{\cal Y} \cup \left\{ x \right\}} \right) - {Q_{m,k}}\left( {\cal Y} \right).
  \end{equation}
  \item We assume that the content requested by user $k$ is not contained in $\left\{ x \right\}$, ${\cal X}$, ${\cal Y}$.
  Then
  \begin{equation}
  \begin{aligned}
  {Q_{m,k}}\left( {\cal X} \right)&{\rm{ = }}{Q_{m,k}}\left( {\cal Y} \right){\rm{ = }}{Q_{m,k}}\left( {{\cal X} \cup x} \right){\rm{ = }}{Q_{m,k}}\left( {{\cal Y} \cup x} \right)\\
  &{\rm{ = }} - \ln \left( {D_{m,k}^a + \frac{{\sum\nolimits_{f=1}^F {s{q_{k,f}}} }}{{{b_{m,n}}}}} \right).
  \end{aligned}
  \end{equation}
  Then, we have
  \begin{equation}
  {Q_{m,k}}\left( {{\cal X} \cup \left\{ x \right\}} \right) - {Q_{m,k}}\left( {\cal X} \right) \ge {Q_{m,k}}\left( {{\cal Y} \cup \left\{ x \right\}} \right) - {Q_{m,k}}\left( {\cal Y} \right).
  \end{equation}
  \item The content requested by user $k$ is not contained in $\left\{ x \right\}$. The content is contained in ${\cal X}$ and ${\cal Y}$. Then
  \begin{equation}
  \begin{aligned}
  {Q_{m,k}}\left( {\cal X} \right)&{\rm{ = }}{Q_{m,k}}\left( {\cal Y} \right){\rm{ = }}{Q_{m,k}}\left( {{\cal X} \cup x} \right){\rm{ = }}{Q_{m,k}}\left( {{\cal Y} \cup x} \right)\\
  &{\rm{ = }} - \ln \left( {D_{m,k}^a}\right).
  \end{aligned}
  \end{equation}
   Then, we have:
  \begin{equation}
  {Q_{m,k}}\left( {{\cal X} \cup \left\{ x \right\}} \right) - {Q_{m,k}}\left( {\cal X} \right) \ge {Q_{m,k}}\left( {{\cal Y} \cup \left\{ x \right\}} \right) - {Q_{m,k}}\left( {\cal Y} \right).
  \end{equation}
  \item The content requested by user $k$ is not contained in $\left\{ x \right\}$ and ${\cal X}$. It is contained in ${\cal Y}$.
  \begin{equation}
  \begin{aligned}
{Q_{m,k}}\left( {\cal X} \right)&{\rm{ = }}{Q_{m,k}}\left( {{\cal X} \cup x} \right)\\
&{\rm{ = }} - \ln \left( {D_{m,k}^a + \frac{{\sum\nolimits_{f=1}^F {s{q_{k,f}}} }}{{{b_{m,n}}}}} \right).
\end{aligned}
\end{equation}
  \begin{equation}
  {Q_{m,k}}\left( {\cal Y} \right){\rm{ = }}{Q_{m,k}}\left( {{\cal Y} \cup x} \right){\rm{ = }} - \ln \left( {D_{m,k}^a} \right).
  \end{equation}
  Then, we have:
  \begin{equation}
  {Q_{m,k}}\left( {{\cal X} \cup \left\{ x \right\}} \right) - {Q_{m,k}}\left( {\cal X} \right) \ge {Q_{m,k}}\left( {{\cal Y} \cup \left\{ x \right\}} \right) - {Q_{m,k}}\left( {\cal Y} \right).
  \end{equation}
 \end{enumerate}
 In summary, ${Q_{m,k}}\left( u \right)$ is a monotone and submodular function. So $ {\sum\limits_{k = 1}^K {{a_{m,k}}{Q_{m,k}}} } $ is also a monotone and submodular function. The optimization objective of the caching placement is to maximize a monotone and submodular function.\par
 It is shown in~\cite{Calinescu2007Maximizing} that the greedy algorithm for maximizing a monotone and submodular function can reach a near-optimal solution. Hence, theorem~\ref{lemma_cp} is proved.
\end{proof}
\section*{Appendix~B: Proof of Theorem~\ref{theorem}}\label{appendix_b}
\renewcommand{\theequation}{B.\arabic{equation}}
\setcounter{equation}{0}
\begin{proof}
The derivative of $L\left( \alpha  \right)$ is given by
\begin{equation}
\label{daoshu}
\frac{{\partial L}}{{\partial {\alpha _m}}}\left( \alpha  \right){\rm{ = }}{w_m}\left( \alpha \right) - \sum\limits_{k = 1}^K {{a_{m,k}}\left( \alpha \right)}.
\end{equation}
In our optimization problem~\eqref{optimal_user_association}, we have ${w_m} = \sum\limits_k {{a_{m,k}}}$. According to~\eqref{daoshu}, when $w_m$ and $\sum\limits_k {{a_{m,k}}}$ are bounded, the subgradient of the dual objective function $\partial L$ is also bounded
\begin{equation}
\mathop {\sup }\limits_t \left\{ {\left\| {\partial L\left( {\alpha\left( t \right)} \right)} \right\|} \right\} \le a,
\end{equation}
where $a$ is some scalar. The optimization problem satisfied the necessary conditions of Proposition 6.3.6 in~\cite{Pardalos2010Convex}. Theorem~\ref{theorem} is proved by applying this proposition.\par
\end{proof}

\section*{Appendix~C: Proof of Theorem~\ref{theorem3}\label{appendix_c}}
\renewcommand{\theequation}{C.\arabic{equation}}
\setcounter{equation}{0}
\begin{proof} {
We assume Algorithm~\ref{cache_placement} yields a decreasing objective value in $i-th$ iteration, which is denoted as
\begin{equation}
M\left( {{{\cal X}^{\left( i \right)}},{{\cal S}^{\left( i \right)}},{{\cal A}^{\left( {i - 1} \right)}}} \right) < M\left( {{{\cal X}^{\left( i \right)}},{{\cal S}^{\left( {i - 1} \right)}},{{\cal A}^{\left( {i - 1} \right)}}} \right),
\end{equation}
This means that the benefits of at least one content are reduced. In our system model, the benefit of content $f$ cached in UAV $m$ can be calculated as a constant with fixed $\cal X$ and $\cal A$, which is denoted as
\begin{equation}
\Omega _{m,f} = \sum\limits_{k = 1}^K {{a_{m,k}}{q_{k,f}}\ln \left( {\frac{{B{{\log }_2}\left( {1 + SIN{R_{m,n,k}}} \right)}}{{s\sum\limits_{k = 1}^K {{a_{m,k}}} }}} \right)},
\end{equation}
As we have discussed above, one-to-one swap matching algorithm and Lagrange dual algorithm can both yield an increasing objective value in each iteration. Then
\begin{equation}
{\Omega _{m,f}}\left( {i - 1} \right) \le {\Omega _{m,f}}\left( i \right),\forall m,f,
\end{equation}
which contradicts our assumption. So Theorem~\ref{theorem3} is proved by contradiction.}
\end{proof}
\section*{Appendix~D: Proof of Convergence\label{appendix_d}}
\renewcommand{\theequation}{D.\arabic{equation}}
\setcounter{equation}{0}
\begin{proof} {
Let $M\left( {{{\cal X}^{\left( i \right)}},{{\cal S}^{\left( i \right)}},{{\cal A}^{\left( i \right)}}} \right)$ denote the total MOS of all users calculated by ${{{\cal X}^{\left( i \right)}}}$, ${{{\cal S}^{\left( i \right)}}}$ and ${{{\cal A}^{\left( i \right)}}}$. We obtain ${{{\cal X}^{\left( i \right)}}}$ under a stable matching state by \textbf{Algorithm 1} with fixed ${{{\cal S}^{\left( i - 1 \right)}}}$ and ${{{\cal A}^{\left( i - 1 \right)}}}$,
\begin{equation}
M\left( {{{\cal X}^{\left( i \right)}},{{\cal S}^{\left( i - 1 \right)}},{{\cal A}^{\left( i - 1  \right)}}} \right) \ge M\left( {{{\cal X}^{\left( i \right)}},{{\cal S}^{\left( {i - 1} \right)}},{{\cal A}^{\left( {i - 1} \right)}}} \right).
\end{equation}
In \textbf{Algorithm 4}, based on Theorem~3, the total MOS of all users does not decrease after \textbf{Algorithm~2} is performed.
\begin{equation}
M\left( {{{\cal X}^{\left( i \right)}},{{\cal S}^{\left( i \right)}},{{\cal A}^{\left( {i - 1} \right)}}} \right) \ge M\left( {{{\cal X}^{\left( i \right)}},{{\cal S}^{\left( {i - 1} \right)}},{{\cal A}^{\left( {i - 1} \right)}}} \right).
\end{equation}
\textbf{Algorithm 4} is guaranteed to converge to a suboptimal solution by Lagrange dual algorithm with fixed ${{{\cal X}^{\left( i - 1 \right)}}}$ and ${{{\cal S}^{\left( i - 1 \right)}}}$ .
\begin{equation}
M\left( {{{\cal X}^{\left( i \right)}},{{\cal S}^{\left( i \right)}},{{\cal A}^{\left( i \right)}}} \right) \ge M\left( {{{\cal X}^{\left( i \right)}},{{\cal S}^{\left( i \right)}},{{\cal A}^{\left( {i - 1} \right)}}} \right).
\end{equation}
Moreover, the average MOS of users is bounded in a practical UAV-assisted cellular system. So the proposed algorithm can reach convergence and obtain a suboptimal solution.}
\end{proof}
\end{appendices}{}

\vspace{-0.5em}
\begin{spacing}{1}
\bibliography{mybib}{}
\bibliographystyle{IEEEtran}
\end{spacing}
\end{document}